\date{}
\def\vec#1{\underline{#1}}
\def\mat#1{{\mathbf #1}}
\def\1_2{{\frac{1}{2}}}
\newcommand{\rv}[1]{\ensuremath{\boldsymbol{#1}}}
\DeclareMathOperator{\EVOp}{E}
\newcommand{\EV}[2]{\EVOp_#2\left\{ #1 \right\}}
\def\SND{standard normal distribution}
\def\etavec{\vec{\eta}}
\def\NewN{\mathbb{N}} 
\def\NewR{\mathbb{R}} 
\def\NewZ{\mathbb{Z}} 
\def\NewC{\mathbb{C}}
\newcommand{\GD}{Gaussian density}
\newcommand{\GDs}{Gaussian densities}
\newcommand{\GMD}{Gaussian mixture density}
\newcommand{\GMDs}{Gaussian mixture densities}
\newcommand{\DC}{Dirac component}
\newcommand{\DM}{Dirac mixture}
\def\Eq#1{(\ref{#1})}
\def\Sec#1{Sec.~\ref{#1}}
\def\SubSec#1{Subsec.~\ref{#1}}
\def\Fig#1{Fig.~\ref{#1}}
\def\Def#1{Definition~\ref{#1}}
\def\Lemma#1{Lemma~\ref{#1}}
\newcommand\AdaptiveCapitalization[1]{\ifnum\ifhmode\spacefactor\else2000\fi>1000 \uppercase{#1}\else#1\fi}
\newcommand{\cM}{\AdaptiveCapitalization{c}ircular moment}
\newcommand{\vMd}{\AdaptiveCapitalization{v}on Mises distribution}
\newcommand{\WNd}{\AdaptiveCapitalization{w}rapped Normal distribution}
\newcommand{\pdf}{\AdaptiveCapitalization{p}robability density function}
\newcommand{\cpdf}{\AdaptiveCapitalization{c}ircular probability density function}
\newcommand{\IKD}{\AdaptiveCapitalization{i}nduced kernel density}
\def\DMA{\DM{} approximation}
\def\DMD{\DM{} density}
\def\DMDs{\DM{} densities}
\newcommand{\PWCD}{\AdaptiveCapitalization{p}iecewise constant density}
\newcommand{\PWCDs}{\AdaptiveCapitalization{p}iecewise constant densities}
\newcommand{\HOM}{\AdaptiveCapitalization{h}igher-order moment}
\def\MEDMA{\AdaptiveCapitalization{m}aximum entropy \DMA{}}
\def\LMDMA{Levenberg-Marquardt \DMA{}}
\newlength\EqLen
\def\ScaleInner#1{%
\settowidth{\EqLen}{#1}
\ifdim\EqLen < \columnwidth%
\begin{equation*}%
\begin{minipage}{\EqLen}#1\end{minipage}%
\end{equation*}%
\else%
\begin{equation*}%
\resizebox{0.99\columnwidth}{!}{\begin{minipage}{\EqLen}#1\end{minipage}}%
\end{equation*}%
\fi%
}%
\def\Scale#1
\def\LongVersion#1{}
\def\citep#1{(\cite{#1})}
\newtheorem{theorem}{Theorem}[section]
\newtheorem{lemma}[theorem]{Lemma}
\theoremstyle{definition}
\newtheorem{definition}{Definition}[section]
\newtheorem{example}{Example}[section]
\theoremstyle{remark}
\newtheorem{remark}{Remark}[section]
\begin{document}

\begin{frontmatter}

\title{Truncated Moment Problem for Dirac Mixture Densities\\
with Entropy Regularization
}

\author[isas]{Uwe~D.~Hanebeck}
\ead{uwe.hanebeck@ieee.org}

\address[isas]{Intelligent Sensor-Actuator-Systems Laboratory (ISAS)\\
Institute for Anthropomatics and Robotics\\
Karlsruhe Institute of Technology (KIT), Germany}

\begin{abstract}
We assume that a finite set of moments of a random vector is given. Its underlying density is unknown. An algorithm is proposed for efficiently calculating  \DMDs{} maintaining these moments while providing a homogeneous coverage of the state space.
\end{abstract}

\end{frontmatter}

\section{Introduction}

\begin{wrapfigure}{r}{0.5\textwidth}
  \vspace*{-10mm}
	\begin{center}
		\includegraphics{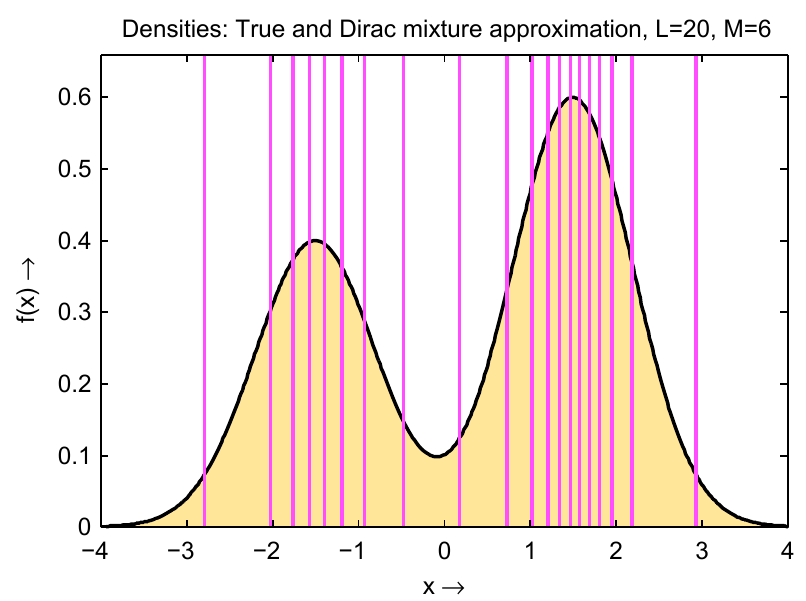}		
  \end{center}
	\vspace*{-5mm}
  \caption{Maximum entropy \DMD{} (purple) with $20$ components and prescribed moments up to order~$6$. The underlying continuous density (yellow) for generating the moments is unknown.}
	\vspace*{-2mm}
\end{wrapfigure}

%
%

We consider a sequence of mappings
\begin{equation}
e_k=\int_{{\cal X}} m_k(\vec{x}) \, \tilde{f}(\vec{x}) \, d \vec{x}
\label{Eq_GeneralMoments}
\end{equation}
from a \pdf{} $\tilde{f}(\vec{x})$ to the so-called moments $e_k$ for $k=0, 1, \ldots$, where ${\cal X}$ is a Polish space. Examples for ${\cal X}$ are the set of real numbers $\NewR$, the $N$-dimensional Euclidean space $\NewR^N$, the unit interval $[0, 1]$, $\NewC$, $\NewC^N$, the unit circle $S^1=\{z \in \NewC: |z|=1\}$, and so forth. Here, we focus on the $N$-dimensional Euclidean space $\NewR^N$.

%
%

We are interested in the inverse problem of deducing the \pdf{} $\tilde{f}(\vec{x})$ from these mapping given the moment sequence. This is called the \emph{moment problem}. Here, we focus on the case that a finite moment sequence $e_k$ for $k=0, 1, \ldots, K$ is given. The problem is then called the \emph{truncated moment problem}.

%
%

Various types of moments can be considered depending upon the functions $m_k(\vec{x})$, $k=0, 1, \ldots, K$. This includes the common power moments and trigonometric moments, which are useful for periodic state spaces such as the unit circle. Here, we focus on power moments.

%
%

We can now ask several fundamental questions such as: Does a density $\tilde{f}(\vec{x})$ exist for the given moment sequence $e_k$, $k=0, 1, \ldots, K$? When a density $\tilde{f}(\vec{x})$ exists, is it uniquely defined by the moment sequence? When it is not uniquely defined, how is the set of densities with the given moment sequence characterized?

%
%

So far, we did not pose restrictions on the \pdf{} $\tilde{f}(\vec{x})$ to be reconstructed from the moment sequence. So, $\tilde{f}(\vec{x})$ can be selected from the space of density functions, which lead to an infinite-dimensional problem. More practical questions on existence, uniqueness, and characterization can be asked, however, when we restrict ourselves to finite-dimensional approximations of the underlying true density $\tilde{f}(\vec{x})$. We consider specific densities $f(\vec{x})$ with a finite-dimensional parametrization, where we have to select a density type with a given structure. In some cases, it is also useful to consider restrictions on their parameter sets. For example, we could ask whether a \GMD{} with three components exists for a given moment sequence and whether it is unique.

%
%

There are many types of parametric densities available such as \GDs{}, \GMDs{}, and exponential densities. In this paper, we consider \DMDs{} $f(\vec{x})$ for approximating the underlying true density $\tilde{f}(\vec{x})$.

%
%

Up to now, the only information available about the underlying true density $\tilde{f}(\vec{x})$ was the moment sequence of length $K+1$. This restricts the number of (independent) parameters of the approximating density $f(\vec{x})$ to be less than or equal to $K+1$. 
%
%
Restricting the number of parameters to $K+1$ is especially problematic as typically the number of available moments is itself limited. For the case of power moments of up to a certain order $M$, the number of moments quickly increases with the number of dimensions $N$ and the order $M$.  Even for a moderate number of dimensions, calculating \HOM s becomes intractable.

%
%

A good coverage of the important regions of the state space with the approximating density $f(\vec{x})$ is mandatory in many applications. For \DMDs{}, this means that we need a large number of components, equivalent to a large number of parameters to be determined.
%
%
When more density parameters than given moments have to be determined, we face an underdetermined inverse problem with an infinite solution set. In that case, additional information about the underlying true density $\tilde{f}(\vec{x})$ such as its support, shape, symmetries, or its smoothness is required. Alternatively, we have to directly impose additional assumptions on the approximating density $f(\vec{x})$. This information can be used to define a regularizer for picking out a single solution with the desired properties.

%
%

An interesting border case is the availability of the full underlying true density $\tilde{f}(\vec{x})$ together with a few of its moments. In that case, we want to find an approximating density $f(\vec{x})$ that maintains the given moments and is in some way as close as possible to the true density.

%
%

A detailed problem formulation is given in the following section including a compact representation of given moments up to a certain order and some words on regularization. \Sec{Sec_StateOfTheArt} gives an overview of the state of the art.

\section{Problem Formulation} \label{Sec_ProbForm}

%
%

A random vector $\rv{\vec{x}} = \begin{bmatrix} \rv{x}_1, \rv{x}_2, \ldots, \rv{x}_N \end{bmatrix}^T \in \NewR^N$ is characterized by a finite set of moments only. The underlying true \pdf{} $\tilde{f}(\vec{x})$ of $\rv{\vec{x}}$ is unknown. The true density $\tilde{f}(\vec{x})$ can be a continuous density or a discrete density on the continuous domain $\NewR^N$.

%
%

Our goal is to represent the unknown \pdf{} $\tilde{f}(\vec{x})$ of the random vector $\rv{\vec{x}}$ by an approximate density $f(\vec{x})$ that has the desired moments. For the approximation, we focus on discrete \pdf s $f(\vec{x})$ on the continuous domain $\NewR^N$. Here, we use a so called \DMD{} $f(\vec{x})$ with $L$ Dirac components given by
\begin{equation}
f(\vec{x}) = \sum_{i=1}^L f_i(\vec{x}) = \sum_{i=1}^Lw_i \cdot \delta(\vec{x} - \hat{\vec{x}}_i) \enspace ,
\label{Eq_DiracMixture}
\end{equation}
with positive weights, i.e., $w_i > 0$ for $i=1, \ldots, L$, that sum up to one, i.e., $\sum_{i=1}^L w_i =1$, and locations $\hat{\vec{x}}_i$ with components $\hat{x}_{ki}$ for dimension $k$ with $k=1,\ldots,N$ and $\hat{\vec{x}}_i \ne \hat{\vec{x}}_j$ for $i=1, \ldots, L$, $j=1, \ldots, L$, $i \ne j$. The locations are collected in a matrix $\hat{\mat{X}}=\begin{bmatrix} \hat{\vec{x}}_1, \hat{\vec{x}}_2, \ldots, \hat{\vec{x}}_L \end{bmatrix} \in \NewR^{N \times L}$.

%
%

Our goal is to systematically find a \DMD{} $f(\vec{x})$ in \Eq{Eq_DiracMixture} that maintains the moments of the true density $\tilde{f}(\vec{x})$ by adjusting its parameters, i.e., its weights $w_i$ and its locations $\hat{\vec{x}}_i$ for $i=1, \ldots, L$. In this paper, we focus on adjusting the locations only. The component weights are all equal. In addition, we assume that a solution exists, i.e., the number of components $L$ is selected to be large enough so that locations exist that fulfill the given moments defined in the next subsection.

%
%

We define a parameter vector $\etavec \in {\cal S} = \NewR^{L \cdot N}$ containing the parameters as
\begin{equation}
\etavec = \begin{bmatrix} \hat{\vec{x}}_1^T, \hat{\vec{x}}_2^T, \ldots, \hat{\vec{x}}_L^T \end{bmatrix}^T
\label{Eq_ParVec}
\end{equation}
and write $f(\vec{x})=f(\vec{x}, \etavec)$.


\subsection{Given Moments}

%
%

We consider power moments for characterizing the random vector $\rv{\vec{x}}$, so we will now specify concrete functions $m_k(\vec{x})$ in \Eq{Eq_GeneralMoments}. For denoting the moment order, we employ a multi-index notation with $\rv{\kappa} = \begin{pmatrix} \kappa_1, \kappa_2, \ldots, \kappa_N \end{pmatrix}$ containing non-negative integer indices for every dimension. We define $|\rv{\kappa}|=\kappa_1+\kappa_2+\ldots+\kappa_N$, $\rv{\kappa}+i=\begin{pmatrix} \kappa_1+i, \kappa_2+i, \ldots, \kappa_N+i \end{pmatrix}$ with $i \in \NewZ$ such that $\rv{\kappa}+i \ge 0$, and $\vec{x}^{\rv{\kappa}} = x_1^{\kappa_1} \cdot x_2^{\kappa_2} \cdots x_N^{\kappa_N}$. For a scalar $c$, the expression $\rv{\kappa} \le c$ is equivalent to $\kappa_k \le c$ for $k=1, 2, \ldots, N$. The power moments of a random vector $\vec{\rv{x}}$ with density $f(\vec{x})$ are given by
%
%
\begin{equation}
e_{\rv{\kappa}} = \int_{\NewR^N} \vec{x}^{\rv{\kappa}} f(\vec{x}) \, d \vec{x}
\label{Eq_Moment}
\end{equation}
for $\kappa \in \NewN_0^N$. For zero-mean random vectors $\vec{\rv{x}}$, the moments coincide with the central moments.

%
%

Moments of a certain order $m$ are given by $e_{\rv{\kappa}}$ for $|\rv{\kappa}| = m$. We define a multi-dimensional matrix $\mat{E}_M$ of moments of up to order $M$ as 
\begin{equation}
\mat{E}_M({\rv{\kappa}}+1) = 
\begin{cases} 
e_{\rv{\kappa}} & |\rv{\kappa}| \le M \\
\text{unspecified} & \text{elsewhere}
\end{cases} \enspace ,
\label{Eq_MomentMatrix}
\end{equation}
with $\rv{\kappa} \le M$ and $e_{\rv{\kappa}}$ from \Eq{Eq_Moment}. We increase the multi-index $\rv{\kappa}$ by $1$, so that the matrix $\mat{E}_M \in \NewR^{(M+1) \times (M+1) \times \ldots  \times (M+1)}$ is indexed from $1$ to $M+1$ in every dimension.

%
%

The matrix $\mat{E}_M$ contains unspecified elements for $|\rv{\kappa}| \le M$ that can either be set to zero in a full matrix or omitted in sparse matrices (when the chosen matrix implementation supports sparse matrices). With ${\cal K}_{NM}$ the set of all \emph{valid} index sequences given by
\begin{equation}
{\cal K}_{NM}=\{\kappa : |\rv{\kappa}| \le M\}
\label{Eq_NumberOfMoments}
\end{equation}
the number of specified elements, i.e., the number of moments of up to order $M$, is defined as
$P_{NM}=|K_{NM}|$ and is given next.

\begin{lemma} \label{Theorem_PNM}
For an $N$-dimensional random vector, the number of moments up to order $M$ is
\begin{equation*}
P_{NM}=\frac{(M+N)!}{M! \, N!}
\end{equation*}
\end{lemma}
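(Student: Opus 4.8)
The plan is to reduce the claim to a standard stars-and-bars count. By \Eq{Eq_NumberOfMoments}, the quantity $P_{NM}=|{\cal K}_{NM}|$ is exactly the number of multi-indices $\rv{\kappa}=(\kappa_1,\ldots,\kappa_N)$ with non-negative integer entries satisfying the inequality $\kappa_1+\cdots+\kappa_N\le M$. So the statement has no analytic content inherited from the moment integrals \Eq{Eq_Moment}; it is a purely combinatorial counting problem, and the entire proof consists in enumerating these admissible multi-indices.

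First I would convert the inequality into an equality by introducing a slack variable. Setting $\kappa_{N+1}=M-(\kappa_1+\cdots+\kappa_N)$, the condition $|\rv{\kappa}|\le M$ with every $\kappa_i\ge 0$ becomes equivalent to $\kappa_1+\cdots+\kappa_{N+1}=M$ with all $N+1$ entries non-negative integers. The map $\rv{\kappa}\mapsto(\kappa_1,\ldots,\kappa_N,\kappa_{N+1})$ is a bijection between ${\cal K}_{NM}$ and the set of non-negative integer solutions of this equation, since $\kappa_{N+1}$ is uniquely determined by the first $N$ entries and is a non-negative integer precisely when $|\rv{\kappa}|\le M$.

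Next I would invoke the stars-and-bars formula: the number of non-negative integer solutions of $\kappa_1+\cdots+\kappa_{N+1}=M$ equals the number of ways to distribute $M$ indistinguishable units among $N+1$ bins, namely $\binom{M+N}{N}$. Expanding the binomial coefficient gives $\binom{M+N}{N}=\frac{(M+N)!}{M!\,N!}$, which is the claimed value of $P_{NM}$.

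There is no genuinely hard step; the argument is elementary combinatorics. The only points requiring care are getting the slack-variable bijection right so that the count runs over $N+1$ rather than $N$ variables, and avoiding the off-by-one error in the stars-and-bars exponent, which would otherwise yield $\binom{M+N-1}{N-1}$ — the count of moments of \emph{exactly} order $M$ — instead of the cumulative count up to order $M$. As a cross-check I would verify the identity $\sum_{m=0}^{M}\binom{m+N-1}{N-1}=\binom{M+N}{N}$, so that summing the number of moments of each exact order $m$ from $0$ to $M$ reproduces the same total via the hockey-stick identity.
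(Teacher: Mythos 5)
Your argument is correct: the slack-variable bijection onto compositions of $M$ into $N+1$ non-negative parts and the stars-and-bars count $\binom{M+N}{N}=\frac{(M+N)!}{M!\,N!}$ is exactly the standard elementary argument, and it checks out against the paper's numerical examples ($N=10$, $M=3$ gives $286$; $M=5$ gives $3003$). The paper itself dismisses the proof as ``Elementary'' without supplying details, so your write-up simply fills in the intended combinatorial reasoning rather than diverging from it.
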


\begin{proof}
Elementary.
\end{proof}

For $N=10$ dimensions, the number of moments up to order $M=3$ is $P_{NM}=286$, for $M=5$ already $P_{NM}=3003$.

Of course, there is no need to specify all possible moments for $|\rv{\kappa}| \le M$. In a practical application, there will generally be a lot more unspecified elements.


\subsection{Regularization}

When the length $P_L=L \cdot N$ of the parameter vector $\etavec$ is larger than the number of given moments, the parameters of $f(\vec{x}, \etavec)$ are redundant and a regularizer for $f(\vec{x}, \etavec)$ is required. Here, regularization is performed by selecting the least informative \DM{}, e.g., the one having maximum entropy. As the Shannon entropy for \DMD s is not well defined, we use the entropy of a corresponding \PWCD{}.
%
%
This results in a constrained optimization problem, where the most homogeneous \DMA{} $f(\vec{x}, \etavec)$ is desired that fulfills the given moments \emph{and} maximizes the entropy.

\section{State of the Art} \label{Sec_StateOfTheArt}

%
%

For determining $P_L$ parameters of a \DMD{} in \Eq{Eq_DiracMixture} from a set of $P_{N M}$ moments, we have to distinguish three cases: 
\begin{compactenum}[i)] 
\item $P_L<P_{N M}$, the overdetermined case, i.e., the number of parameters is smaller than the number of moments.
\item $P_L=P_{N M}$, the fully determined case, i.e., the number of parameters is equal to the number of moments.
\item $P_L>P_{N M}$, the underdetermined case, i.e., the number of parameters is larger than the number of moments.
\end{compactenum}


\subsection{$P_L<P_{N M}$, the Overdetermined Case}

%
%
For the overdetermined case, no literature seems to be available. 
%
%
This case is interesting from a theoretical point of view and it will be discussed in more detail later in this paper. 
%
%
From a practical point of view, it makes sense when for some reason the redundancy in the moments can be used to better estimate the parameters of the desired \DMD{}. On the other hand, as discussed above, many \DC s are required to cover the interesting parts of the state space, which leads to a large amount of parameters. It might then be impractical to calculate more moments than parameters.


\subsection{$P_L=P_{N M}$, the Fully Determined Case}

%
%

The fully determined case has been treated a lot in literature in the context of nonlinear Kalman filtering. Moment-based approximations of Gaussian densities are the basis for Linear Regression Kalman Filters (LRKFs) \cite{lefebvre_linear_2005}. Examples are the Unscented Kalman Filter (UKF) in \cite{julier_new_2000} and its scaled version in \cite{julier_scaled_2002}.

%
%

The case of maintaining the first two moments received most attention. Moments of up to second order can be maintained by a \DMD{} with two \DC s per dimension with an optional additional point placed at the mean. This has the huge advantage that the number of components only grows linearly with the number of dimensions.

%
%

Maintaining \HOM s is important for two reasons:
%
%
First, even for Gaussian densities, it makes sense to explicitly consider the \HOM s as the simplest \DM{} (the one with two points per dimension) does not possess the same \HOM s as a \GD{}.
%
%
Second, for non-Gaussian densities \HOM s are essential for capturing asymmetry, multimodality, and so forth.

%
%

Third-order moments are considered in \cite{julier_reduced_2002}. A \DMD{} with $2 N+2$ weighted components is designed that maintains moments of up to second order and, in addition, minimizes the third-order moments. 
%
%
Minimizing the third-order moments is motivated by an underlying \GD{}, but is not useful for general densities, where asymmetries can lead to nonzero third-order moments.

%
%

Under several strong assumptions, \DMDs{} with prescribed higher-order moments have been derived in \cite{tenne_higher_2003}. Assumptions include a placement of components on the coordinate axes only and symmetric densities, so that all odd moments are set to zero. For the actual derivations, \GD s were assumed and the point sets were limited to $4 N +1$ and $6 N+1$ samples with $N$ the number of dimensions.

%
%

\cite{straka_measures_2014} proposes two methods for constructing scalar \DMDs{} with arbitrary first three moments. The first method is a direct approach based on solving for the parameters of a \DMD{} with three weighted components given the first three moments under certain symmetry conditions. Existence of a solution is not guaranteed. The second method is an indirect approach, where a \GMD{} with two components is matched to the given moments, where two degrees of freedom remain to be set by the user. In a second step, two \DMDs{} with three components are calculated matching the first two moments of the individual Gaussian components of the \GMD{}. This results in a final \DMD{} with six components matching the given three moments.

%
%

\DMA{} of \cpdf s analogous to the UKF for linear quantities is introduced in \cite{ACC13_Kurz} for the \vMd{} and the \WNd{}. Based on a closed-form expression for matching the first \cM{}, three \DC s are systematically placed by exploiting symmetry. In \cite{Fusion14_KurzGilitschenski},
a closed-form solution is derived for a symmetric wrapped \DMD{} with five components based on matching the first two circular moments. This \DMA{} of continuous \cpdf s has already been applied to sensor scheduling based on bearings-only measurements \cite{Fusion13_Gilitschenski}. The results have also been used to perform recursive circular filtering for tracking an object constrained to an arbitrary one-dimensional manifold in \cite{IPIN13_Kurz}.


\subsection{$P_L>P_{N M}$, the Underdetermined Case}

%
%
We will now consider the case of more parameters than given moments. 
%
%
As the solution of this inverse problem is not unique, it either requires more information about the underlying density to be reconstructed or assumptions on the desired \DMD{}. In either case, we will perform regularization to guarantee a unique solution with the desired properties.
%
%
We will consider prior information in two different ways:  Either the full density is given or we just know that the underlying true density is smooth.


\subsubsection{Full Density Available}

%
%

When the full underlying true density is given, the most basic problem is to not maintain any moment. Regularization is performed by minimizing a distance between the underlying true density and its \DMA{}. As distances between continuous densities and discrete densities on continuous domains are difficult to define, the densities are typically transformed to a different representation before the distance computation is performed.

%
%

Methods for \DMA{} of scalar continuous densities based on a distance between cumulative distributions with no moment constraint are introduced in
\cite{CDC06_Schrempf-DiracMixt, 
MFI06_Schrempf-CramerMises} 
for a given but arbitrary number of components. An algorithm for sequentially increasing the number of components is given in  
\cite{CDC07_HanebeckSchrempf} 
and applied to recursive nonlinear prediction in
\cite{ACC07_Schrempf-DiracMixt}. 
%

%
%

For arbitrary multi-dimensional Gaussian densities, \DMA s are systematically calculated in \cite{CDC09_HanebeckHuber}. The comparison of densities is performed by comparing probability masses under kernels of arbitrary location and size. For this purpose, the so called Localized Cumulative Distribution (LCD) is introduced in \cite{MFI08_Hanebeck-LCD}. A modified Cram\'{e}r-von Mises distance is then defined based on the LCDs. For the case of \SND s with a subsequent transformation, a more efficient method is given in \cite{ACC13_Gilitschenski}.

%
%

For multidimensional densities with given mean and given variances in every dimension, a method for placing an arbitrary number of \DC s along the coordinate axes is introduced in \cite{IFAC08_Huber}. The multi-dimensional problem is broken down into one-dimensional problems that are solved by minimizing the distance between cumulative distributions given the two moment constraints.

%
%

Multidimensional \DMA s of arbitrary densities with an arbitrary component placement and arbitrary higher-order moment constraints are calculated in \cite{CISS14_Hanebeck}.  Compared to \cite{CDC09_HanebeckHuber}, a faster but suboptimal distance comparison is used. Instead of comparing the probability masses on all scales as in \cite{CDC09_HanebeckHuber}, repulsion kernels are introduced to assemble an \IKD{} and perform the comparison of the true density with its \DMA{}. This method is adapted to Gaussian densities in \cite{Fusion14_Hanebeck}, where a closed-form expression for the distance measure is derived. In addition, a randomized optimization method is employed instead of a quasi-Newton method. This has the advantage of being easier to implement with only a slight decrease in performance.
%
%
An approach similar to the one proposed in \cite{CISS14_Hanebeck} has been derived for \DMA{} of \cpdf s with an arbitrary number of \DC s in \cite{IFAC14_Hanebeck}.


\subsubsection{Smoothness Constraint}

%
%

When it is only known that the underlying density is a smooth continuous density, the first idea that might come to mind is to use an indirect approach. In a first step, a continuous density with the desired moments is found. This can be any smooth parametric density from the exponential family or from a mixture family such as a \GMD{}. In a second step, a \DMA{} of this continuous density is performed. This \DMA{} can be performed with methods discussed before that calculate the \DM{} closest to the given density while simultaneously maintaining the given moments.
 
%
%

The indirect approach has several disadvantages. It is difficult to take care of the given smoothness condition by finding a parametric continuous density first as this includes finding both an appropriate type of density with an appropriate structure and appropriate parameters. This step will most likely introduce unwanted artifacts. In addition, the approximation becomes unnecessarily complicated as we now have to solve two moment problems, a moment problem for the continuous density in the first step and a moment problem for its \DMA{} in the second step.

%
%

We prefer a direct approach, where the smoothness constraint is fulfilled by finding the most homogeneously distibuted \DMA{} under the given moment constraints. To the author's knowledge, no solution to this problem exists for the case of multi-dimensional densities with an arbitrary placement of \DC s.


\subsection{Contribution of this Paper}

%
%

We consider the finite moment problem of calculating parameters of a \DMD{} with a given number of components and prescribed moments.
The true underlying density is unknown.
We focus on redundant problems, where the number of parameters is (much) larger than the number of given moment constraints.
This is an underdetermined problem with an infinite solution set, so that a regularizer is required to exploit redundancy.
Here, we desire a \DMD{} with the most homogeneous coverage under the given moment constraints.
%

%
%

For regularization, the entropy of the \DMD{} could be used. However, the Shannon entropy is not well defined for discrete densities on continuous domains. Here, we propose to use the entropy of the corresponding maximum entropy piecewise constant density approximation. This approximation has a nice interpretation and, for given Dirac components, is given as the solution of a \emph{convex} optimization problem with \emph{linear} inequality constraints. Regularization is now performed by selecting the components of the \DMD{} in such a way that the entropy of this \PWCD{} approximation is maximized.

%
%

The remainder of this paper is structured as follows. The \PWCD{} used for guaranteeing a homogeneous coverage of the final \DMD{} is introduced in \Sec{Sec_PiecewiseConstant}. Calculating the \DMD{} with given moments and homogeneous coverage is discussed in \Sec{Sec_DMA}. Implementation details are given in \Sec{Sec_Implement}. An evaluation is conducted in \Sec{Sec_Evaluation}. Conclusions are drawn in \Sec{Sec_Conclude}.

\section{Piecewise Constant Density Approximation} \label{Sec_PiecewiseConstant}

%
%

In this section, we derive a piecewise constant approximation of the given \DM{}. Its parameters are calculated in such a way that the Shannon entropy is maximized.
%
%
We now define the specific form of piecewise constant density used in this paper.

\begin{definition}[Piecewise constant density]\label{Def_PWCD}
We define a piecewise constant density as a mixture with $L$ components
\begin{equation*}
p(\vec{x}) = \sum_{i=1}^L R(\vec{x}, \hat{\vec{x}}_i, d_i) \enspace ,
\end{equation*}
where each component $R(\vec{x}, \hat{\vec{x}}_i, d_i)$, $i=1, \ldots, L$ is constant within a sphere of radius $d_i$ and given by
\begin{equation*}
R(\vec{x}, \hat{\vec{x}}_i, d_i) = 
\begin{cases} 
h_i & \text{for } \| \vec{x} - \hat{\vec{x}}_i \| \le d_i \\ 
0 & \text{elsewhere}
\end{cases} \enspace ,
\end{equation*}
with $d_i>0$ and the constant heights $h_i$ for each component to be determined. In addition, we desire that the components are disjoint according to
\begin{equation}
d_i + d_j < \| \hat{\vec{x}}_i - \hat{\vec{x}}_j \|
\label{Eq_Constraint}
\end{equation}
holds for all $i=1, \ldots, L$, $j=1, \ldots, L$ with $i \ne j$.
\end{definition}

%
%

\begin{remark}
The diameters $d_i$, $i=1, \ldots, L$ will be collected in a vector $\vec{d}=\begin{bmatrix}d_1, \ldots, d_L \end{bmatrix}^T$.
\end{remark}

\begin{remark}
By exploiting symmetry, \Eq{Eq_Constraint} needs to be checked only for $i=1, \ldots, L-1$, $j=i+1, \ldots, L$. This results in a total of $(L+1)L/2$ inequality constraints.
\end{remark}

%
%

When the piecewise constant density is used as a representation of a given \DM{} with weights $w_i$, $i=1, \ldots, L$, we can calculate appropriate values for the constant heights $h_i$.

\begin{lemma} \label{Lemma_hi}
The constant height for each component $h_i$, $i=1, \ldots, L$ is given by
\begin{equation*}
h_i = \frac{w_i}{V_N(d_i)} \enspace .
\end{equation*}
with $V_N(.)$ the volume of an $N$-dimensional hyper-sphere given by
\begin{equation*}
V_N(d) = \frac{\pi^{\frac{N}{2}}}{\Gamma\left( \frac{N}{2}+1 \right) } \, d^N \enspace .
\end{equation*}

\begin{proof}
As the components of the piecewise constant density $p(\vec{x})$ according to \Def{Def_PWCD} are disjoint, it is possible to treat the individual components separately. We require that the probability mass of the individual component $i$ is equal to the mass of the corresponding Dirac component, which is written as
\begin{equation*}
\int_{\NewR^N} R(\vec{x}, \hat{\vec{x}}_i, d_i) \, d \, \vec{x} = w_i \enspace .
\end{equation*}
By evaluating the left-hand-side, we obtain an integral over the $N$-dimensional hyper-sphere ${\cal S}_N(d_i)$ with radius $d_i$ as
\begin{equation*}
\begin{split}
\int_{{\cal S}_N(d_i)} R(\vec{x}, \hat{\vec{x}}_i, d_i) \, d \, \vec{x} 
& = \int_{{\cal S}_N(d_i)} h_i \, d \, \vec{x} \\
& = h_i \int_{{\cal S}_N(d_i)} \, d \, \vec{x} \\
& = h_i \cdot V_N(d_i)
= w_i \enspace ,
\end{split}
\end{equation*}
which concludes the proof.
\end{proof}
\end{lemma}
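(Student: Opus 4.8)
The plan is to determine each height $h_i$ independently by matching the probability mass carried by the $i$-th piecewise constant component to the weight $w_i$ of the corresponding Dirac component. The crucial observation is that the disjointness condition \Eq{Eq_Constraint} in \Def{Def_PWCD} guarantees that the supports of the individual components do not overlap, so that the total mass of $p(\vec{x})$ decomposes into a sum of independent contributions, one per component. This decoupling is what lets me treat each $R(\vec{x}, \hat{\vec{x}}_i, d_i)$ in isolation rather than having to solve a coupled system for all heights at once.

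First I would impose the per-component mass-matching condition $\int_{\NewR^N} R(\vec{x}, \hat{\vec{x}}_i, d_i) \, d\vec{x} = w_i$. Since $R$ equals the constant $h_i$ inside the ball of radius $d_i$ centered at $\hat{\vec{x}}_i$ and vanishes outside, the integral collapses to $h_i$ times the volume of that ball, the constant being pulled out of the integral. I would then invoke the standard closed form $V_N(d) = \frac{\pi^{N/2}}{\Gamma(N/2+1)}\, d^N$ for the volume of an $N$-dimensional hyper-sphere of radius $d$; note that this volume is translation invariant, hence independent of the center $\hat{\vec{x}}_i$ and depending only on $d_i$. Solving the resulting scalar equation $h_i \cdot V_N(d_i) = w_i$ for $h_i$ immediately yields the claimed expression.

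There is no genuine obstacle here: at its core the statement is just the elementary fact that integrating a constant over a region returns that constant times the region's volume. The only point worth flagging is the role of disjointness, which is precisely what makes the per-component mass-matching globally consistent, so that $p(\vec{x})$ integrates to the total mass $\sum_i w_i = 1$ as required. Without the separation guaranteed by \Eq{Eq_Constraint}, overlapping supports would couple the heights and this simple closed form would no longer hold.
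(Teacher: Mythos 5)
Your proposal is correct and follows essentially the same route as the paper's proof: equate the mass of each piecewise constant component to the corresponding Dirac weight, pull the constant $h_i$ out of the integral over the ball, and solve $h_i \cdot V_N(d_i) = w_i$. Your added remark that disjointness makes the per-component matching globally consistent (so $p$ integrates to one) is a nice touch but does not change the argument.
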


%
%

We are now interested in piecewise constant densities that are as homogeneously distributed as possible under certain constraints. For that purpose, we will maximize its Shannon entropy.

\begin{lemma}
The Shannon entropy $h(p)$ of the piecewise constant density $p(.)$ defined in \Def{Def_PWCD} is given by
\begin{equation*}
h(p) = c_N - \sum_{i=1}^L w_i \log\left( \frac{w_i}{d_i^N} \right) \enspace ,
\end{equation*}
where $c_N$ is a constant depending on the number of dimensions $N$.

\begin{proof}
The Shannon entropy of the \PWCD{} $p(\vec{x})$ according to \Def{Def_PWCD} is defined as
\begin{equation*}
h(p) = E \left\{ -\log(p) \right\} = - \int_{\NewR^N} p(\vec{x}) \log\left( p(\vec{x}) \right) \, d \, \vec{x} \enspace .
\end{equation*}
Again, as the components of the \PWCD{} $p(\vec{x})$ are disjoint, we can exchange summation and integration, which gives
\begin{equation*}
h(p) = - \sum_{i=1}^L \int_{\NewR^N} R(\vec{x}, \hat{\vec{x}}_i, d_i) \log\left( R(\vec{x}, \hat{\vec{x}}_i, d_i) \right) \, d \, \vec{x} 
\end{equation*}
or
\begin{equation*}
\begin{split}
h(p) & = - \sum_{i=1}^L \int_{{\cal S}_N}  R(\vec{x}, \hat{\vec{x}}_i, d_i) \log\left( R(\vec{x}, \hat{\vec{x}}_i, d_i) \right) \, d \, \vec{x} \\ 
& = - \sum_{i=1}^L \int_{{\cal S}_N} h_i \log( h_i ) \, d \, \vec{x} \\
& = - \sum_{i=1}^L h_i \log( h_i ) \underbrace{ \int_{{\cal S}_N(d_i)} \, d \, \vec{x} }_{V_N(d_i)} \enspace .
\end{split}
\end{equation*}
With $h_i$ from \Lemma{Lemma_hi}, we obtain
\begin{equation*}
\begin{split}
h(p) & = - \sum_{i=1}^L w_i \log\left( \frac{w_i}{V_N(d_i)} \right) \\
& = \sum_{i=1}^L w_i \left[ \log\left( \frac{\pi^{\frac{N}{2}}}{\Gamma\left( \frac{N}{2}+1 \right) } \right) 
- \log\left( \frac{w_i}{d_i^N} \right)\right] \\
& = \log\left( \frac{\pi^{\frac{N}{2}}}{\Gamma\left( \frac{N}{2}+1 \right) } \right)
- \sum_{i=1}^L w_i \log\left( \frac{w_i}{d_i^N} \right) \enspace ,
\end{split}
\end{equation*}
which gives the desired result.
\end{proof}
\end{lemma}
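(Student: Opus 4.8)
The plan is to compute the entropy integral directly, exploiting the disjoint-support structure of the \PWCD{} that was built into \Def{Def_PWCD}. Starting from the definition $h(p) = -\int_{\NewR^N} p(\vec{x}) \log p(\vec{x}) \, d\vec{x}$, the crucial observation is that the separation constraint \Eq{Eq_Constraint} makes the $L$ spherical supports pairwise disjoint. Hence at every point $\vec{x}$ at most one component $R(\vec{x}, \hat{\vec{x}}_i, d_i)$ is nonzero, so $p(\vec{x}) = h_i$ on the $i$-th sphere and $p(\vec{x}) = 0$ outside all spheres. This means the logarithm never acts on a sum of overlapping terms, and the integrand collapses to $h_i \log h_i$ on sphere $i$ (with the standard convention $0 \log 0 = 0$ on the complement). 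I would therefore split the single integral into a sum of $L$ integrals over individual spheres.

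On the $i$-th sphere the integrand is the constant $h_i \log h_i$, so each integral equals $h_i \log(h_i)$ multiplied by the sphere volume $V_N(d_i)$ introduced in \Lemma{Lemma_hi}; this gives $h(p) = -\sum_{i=1}^L h_i \log(h_i) \, V_N(d_i)$. I would then eliminate the heights in favour of the weights using \Lemma{Lemma_hi}, substituting $h_i = w_i / V_N(d_i)$. The volume factor $V_N(d_i)$ coming from the measure cancels against the one in the denominator of $h_i$, leaving $h(p) = -\sum_{i=1}^L w_i \log\bigl( w_i / V_N(d_i) \bigr)$.

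The last step is to separate out the dimension-dependent constant. Inserting the closed form $V_N(d_i) = \pi^{N/2} d_i^N / \Gamma(N/2+1)$ and splitting the logarithm, the factor $\log\bigl(\pi^{N/2}/\Gamma(N/2+1)\bigr)$ pulls out of each summand. Because the weights are normalized, $\sum_{i=1}^L w_i = 1$, this factor survives as a single additive constant $c_N = \log\bigl(\pi^{N/2}/\Gamma(N/2+1)\bigr)$, and what remains is precisely $-\sum_{i=1}^L w_i \log(w_i / d_i^N)$, as claimed.

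I do not expect any deep obstacle here; once disjointness is invoked the argument is essentially bookkeeping. The one point that genuinely does the work — and that I would state explicitly — is that disjointness is exactly what lets the entropy of the mixture decompose additively over components, since without \Eq{Eq_Constraint} the logarithm of the sum would not split and cross terms would appear. The only remaining subtleties are the $0 \log 0 = 0$ convention governing the region outside the supports and the use of $\sum_i w_i = 1$ to guarantee that $c_N$ is truly constant rather than weight-dependent.
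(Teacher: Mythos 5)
Your proposal is correct and follows essentially the same route as the paper: decompose the entropy integral over the disjoint spherical supports, evaluate each piece as $-h_i \log(h_i)\,V_N(d_i)$, substitute $h_i = w_i/V_N(d_i)$ from \Lemma{Lemma_hi}, and extract the constant $c_N = \log\bigl(\pi^{N/2}/\Gamma(N/2+1)\bigr)$ using $\sum_i w_i = 1$. Your explicit remarks on the $0\log 0 = 0$ convention and on where normalization of the weights is used are minor refinements of the paper's argument, not a different approach.
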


\begin{lemma} \label{Lemma_MaxEntPWCD}
For a given \DM{} according to \Eq{Eq_DiracMixture} with weights $w_i$ and components $\hat{\vec{x}}_i$, $i=1, \ldots, L$, the corresponding maximum entropy \PWCD{} in \Def{Def_PWCD} is the solution to the optimization problem
\begin{equation*}
\begin{array}{llclll}
\vec{d}=\displaystyle\operatorname*{arg\,max}_{\vec{d}} \left( h \left( p(\vec{\eta}, \vec{d}) \right) \right) & & & & & \\
& \text{ s.t. } & d_i>0 & \text{for} & i=1, \ldots, L & \\[2mm]
& & \multirow{2}{*}{$d_i + d_j < \|\hat{\vec{x}}_i - \hat{\vec{x}}_j \|$} & \multirow{2}{*}{for} & i=1, \ldots, L 
& \multirow{2}{*}{with $i \ne j$ \enspace .} \\
& & & & j=1, \ldots, L &
\end{array}
\end{equation*}
\end{lemma}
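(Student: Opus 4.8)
The plan is to show that, once the \DM{} in \Eq{Eq_DiracMixture} is fixed, an admissible \PWCD{} is parametrized by the diameter vector $\vec{d}$ alone, so that maximizing its entropy over all admissible $\vec{d}$ is literally the stated optimization problem. The proof is therefore mainly an assembly of the three preceding results rather than a fresh computation.

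First I would observe that in \Def{Def_PWCD} the component locations $\hat{\vec{x}}_i$ are identified with the given Dirac locations and are thus fixed, while the heights $h_i$ are \emph{not} free parameters: requiring that component $i$ carry the same probability mass $w_i$ as the corresponding Dirac component forces $h_i = w_i / V_N(d_i)$ by \Lemma{Lemma_hi}. Hence the only remaining degrees of freedom are the radii $d_i$, collected in $\vec{d}$, and every admissible \PWCD{} associated with the given \DM{} corresponds to exactly one feasible $\vec{d}$ and conversely. Second, I would verify that the admissibility conditions of \Def{Def_PWCD} translate verbatim into the constraints of the optimization problem: $d_i>0$ is the positivity requirement on each radius, and the disjointness condition \Eq{Eq_Constraint}, namely $d_i + d_j < \|\hat{\vec{x}}_i - \hat{\vec{x}}_j\|$ for $i \ne j$, guarantees non-overlapping spheres. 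These are precisely the constraints listed in the lemma, and the objective $h(p(\etavec, \vec{d}))$ is the Shannon entropy derived in the preceding lemma, which depends only on $\vec{d}$ once the weights are fixed. Combining these facts, selecting the maximum-entropy admissible \PWCD{} coincides with maximizing $h(p(\etavec, \vec{d}))$ over the feasible set, which is the assertion.

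I would also record that the resulting program is convex: substituting $V_N(d_i)$ into the entropy gives $h(p) = c_N - \sum_i w_i \log w_i + N \sum_i w_i \log d_i$, a strictly concave and separable function of $\vec{d}$ (a positively weighted sum of logarithms), to be maximized over a region cut out by linear inequalities. This both justifies the earlier claim of convexity and yields uniqueness of the optimizer once existence is settled.

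The main obstacle is a mild technical point of attainment rather than a conceptual one. Since \Eq{Eq_Constraint} is a \emph{strict} inequality, the feasible set is open, and the increasing objective $N \sum_i w_i \log d_i$ drives the radii toward the boundary, so the supremum need not be realized at an interior point. I would resolve this by passing to the closed constraints $d_i + d_j \le \|\hat{\vec{x}}_i - \hat{\vec{x}}_j\|$: for $L \ge 2$ each radius is then bounded above by $\max_{j \ne i}\|\hat{\vec{x}}_i - \hat{\vec{x}}_j\|$ and below by $0$, so the feasible set is compact, and since the continuous objective tends to $-\infty$ as any $d_i \to 0^+$, its maximum is attained; strict concavity makes this maximizer unique. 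Thus the characterization of the solution is unaffected by replacing the strict inequalities with their closures, completing the argument.
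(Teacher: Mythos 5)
The paper actually states \Lemma{Lemma_MaxEntPWCD} without any proof---it is treated as the formal \emph{definition} of ``the corresponding maximum entropy \PWCD{}'', assembled implicitly from \Def{Def_PWCD}, \Lemma{Lemma_hi}, and the preceding entropy formula---so there is no official argument to compare against. Your reconstruction supplies exactly the missing bookkeeping, and it is correct: with the locations fixed at the Dirac positions and the heights forced to $h_i = w_i/V_N(d_i)$ by \Lemma{Lemma_hi}, the only free parameters are the radii; the admissibility conditions of \Def{Def_PWCD} are verbatim the listed constraints; and the objective is the entropy just computed. Two of your additions go beyond the paper and are worth flagging. First, your convexity observation settles the question the paper explicitly leaves open in the remark following the lemma: for fixed $\hat{\vec{x}}_i$ the constraints are linear in $\vec{d}$, so the feasible set is an intersection of half-spaces and hence trivially convex. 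Note also that the derivative implicit in your argument, $\partial h/\partial d_i = +N w_i/d_i > 0$, is the correct one---the paper's remark carries a sign error---though concavity ($\partial^2 h/\partial d_i^2 = -N w_i/d_i^2 < 0$) is unaffected. Second, your attainment point identifies a genuine defect in the lemma as stated: since the objective is strictly increasing in every $d_i$, no point of the \emph{open} feasible region can be a maximizer, so the stated argmax over the strict-inequality set does not exist and one must pass to the closed constraints $d_i + d_j \le \|\hat{\vec{x}}_i - \hat{\vec{x}}_j\|$ exactly as you do (your upper bound should read $\min_{j\ne i}\|\hat{\vec{x}}_i - \hat{\vec{x}}_j\|$ rather than $\max_{j\ne i}$, but either suffices for compactness). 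Your treatment is therefore more careful than the paper's.
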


\begin{remark}
The optimization (maximization) problem in \Lemma{Lemma_MaxEntPWCD} is characterized by a concave objective function as 
\begin{equation*}
\frac{\partial h \left( p(\vec{\eta}, \vec{d}) \right)}{\partial d_i} = -N\frac{w_i}{d_i}
\end{equation*}
for $i=1, \ldots, L$ and subject to \emph{linear} inequality constraints. It remains to be investigated whether the inequality constraints form a convex set, which would make the optimization problem convex.
\end{remark}

\begin{remark}
In general, the optimization problem in \Lemma{Lemma_MaxEntPWCD} requires a total of $L(L+1)/2$ \emph{linear} inequality constraints: $d_i>0$ for $i=1, \ldots, L$, which gives $L$ inequality constraints, and $d_i + d_j < \| \hat{\vec{x}}_i - \hat{\vec{x}}_j \|$ for $i=1, \ldots, L$, $j=1, \ldots, L$, and $i \ne j$, which together with symmetry gives another $L(L-1)/2$ inequalities. The scalar case $N=1$ is an exception as the positions $\hat{x}_i$, $i=1, \ldots, L$ can be ordered\footnote{W.l.o.g., we can assume distinct positions, i.e., $\hat{x}_i \ne \hat{x}_j$ for $i=1, \ldots, L$, $j=1, \ldots, L$, and $i \ne j$. Otherwise, the number of points would have been reduced accordingly.}. In that case, we have a total of $2 L -1$ \emph{linear} inequality constraints: $d_i>0$ for $i=1, \ldots, L$ and $\hat{x}_i+d_i < \hat{x}_{i+1}-d_{i+1}$ for $i=1, \ldots, L-1$.
\end{remark}

\section{Homogeneous Dirac Mixture Approximation with Given Moments} \label{Sec_DMA}

%
%

Our goal is an algorithm for the efficient calculation of a \DMD{} with given moments and a homogeneous coverage of the state space. 
%
%
We will start with taking a look at the given moments. They will act as constraints for the desired \DMD{}.

%
%

Given moments up to an order $M$ are stored in the matrix $\mat{E}_M$ from \Eq{Eq_MomentMatrix}. These given moments will be denoted as 
$\tilde{\mat{E}}_M$. The actual moments of the \DMD{} in \Eq{Eq_DiracMixture} are just denoted by $\mat{E}_M$, so that our moment constraints can be written as
\begin{equation*}
\mat{E}_M \stackrel{!}{=} \tilde{\mat{E}}_M \enspace ,
\end{equation*}
where the left hand side depends on $\vec{\eta}$, i.e., we have $\mat{E}_M=\mat{E}_M(\vec{\eta})$.

%
%

Depending on the number of given moments compared to the number of parameters of the desired \DMD{}, we have to distinguish between three different cases:
\begin{description}
\item[Case 1:] In the first case, the number of given moments is larger than the number of parameters. Now, the system of nonlinear equations given by
\begin{equation*}
\mat{E}_M(\vec{\eta}) - \tilde{\mat{E}}_M \stackrel{!}{=} 0
\end{equation*}
is overdetermined as it has more equations than unknowns, i.e., number of parameters or length of the parameter vector $\vec{\eta}$. In that case, we can calculate a least-squares solution as
\begin{equation*}
\vec{\eta} = \operatorname*{arg\,max}_{\vec{\eta} \in {\cal S}} \left\| \tilde{\mat{E}}_M - \mat{E}_M \right\|_F^2 \enspace ,
\end{equation*}
where $\|.\|_F$ is the Frobenius norm defined for a matrix $\mat{H} \in \NewR^{M_1 \times M_2}$ with elements $h_{ij}$, $i=1, \ldots, M_1$, $j=1, \ldots, M_2$ as
\begin{equation*}
\|\mat{H}\|_F = \sqrt{\sum_{i=1}^{M_1} \sum_{j=1}^{M_2} |h_{ij}|^2} \enspace .
\end{equation*}
A generalization would be to introduce weighting factors for moments of different orders.
\item[Case 2:] In the second case, the number of given moments is exactly equal to the number of parameters. In this case, we have to solve the system of nonlinear equations given by
\begin{equation*}
\mat{E}_M(\vec{\eta}) - \tilde{\mat{E}}_M \stackrel{!}{=} 0 \enspace .
\end{equation*}
for the parameter vector $\vec{\eta}$. In the case of power moments, the left hand side leads to a system of multi-dimensional higher-order polynomials. Analytic solutions are only available in rare special cases, so that numerical root-finding technqies have to be applied. In addition, the solution is not necessarily unique.
\item[Case 3:] The third and last case is the one that we will pursue further. Here, the number of moment constraints is smaller than the number of parameters, so calculating the desired \DMD{} is underdetermined. Just considering the moment constraints would give an infinite solution set for the desired parameter vector $\vec{\eta}$.
\end{description}

%
%

We will now pursue the third case. In that case, the solution, the parameter vector of the \DMD{} with the desired moments, is underdetermined. Thus, we have redundancy available that can be exploited, so that we can impose additional constraints on the desired \DMD{} to perform a regularization and to finally end up with a unique solution. 

%
%

Here, we want the final \DMD{} to be not more informative as already specified by the given moments: It should be as uninformative as possible within the given moment constraints. The density in a \DM{} is encoded by both the weights and the ``density'' of its components, that is their relative spacing. When we do not want to favor certain regions of the state space in terms of their density, the \DM{} should be as homogeneous as possible in terms of weight distribution and location distribution. Intuitively, for equally weighted components we somehow desire equal distances between neighboring components or equivalently equal free spaces around each component. As this is not very precise and does not hold for unequally weighted components, we need a formal definition of homogeneity.

%
%

As a convenient, effective, and intuitive regularizer we use the corresponding \PWCD{} for a \DMD{} as introduced in \Sec{Sec_PiecewiseConstant}. The most homogenous \DM{} from the infinite solution set then is defined as the one that maximizes the entropy of the corresponding \PWCD{}.

%
%

For performing the optimization, we again couple the \PWCD{} with the \DM{} in such a way that the locations $\hat{\vec{x}}_i$ of the \DM{} are the midpoints of the spherical support for each component of the \PWCD{}. The probability mass $w_i$ of the \DM{} determines the height $h_i$ of each  component of the \PWCD{}. In contrast to \Sec{Sec_PiecewiseConstant}, now both the diameters $d_i$ and the locations $\hat{\vec{x}}_i$ are variables that are simultaneously optimized. The optimization result provides the locations $\hat{\vec{x}}_i$ of the most homogenous \DM{}. The optimal diameters $d_i$ of the maximum entropy \PWCD{} are a by-product and can be used for visualization.

\begin{theorem} \label{Lemma_DMA}
For given moments collected in the moment matrix $\tilde{\mat{E}}_M$, a \DMD{} with the same moments and a corresponding maximum entropy \PWCD{} is the solution to the optimization problem
\begin{equation*}
\begin{array}{llclll}
[\vec{\eta}, \vec{d}]=\displaystyle\operatorname*{arg\,max}_{\vec{\eta}, \vec{d}} \left( h \left( p(\vec{\eta}, \vec{d}) \right) \right) & & & & & \\
& \text{ s.t. } & \mat{E}_M = \tilde{\mat{E}}_M & & & \\[2mm]
& & d_i>0 & \text{for} & i=1, \ldots, L & \\[2mm]
& & \multirow{2}{*}{$d_i + d_j - \|\hat{\vec{x}}_i - \hat{\vec{x}}_j \| < 0$} & \multirow{2}{*}{for} & i=1, \ldots, L 
& \multirow{2}{*}{with $i \ne j$ \enspace .} \\
& & & & j=1, \ldots, L &
\end{array}
\end{equation*}
\end{theorem}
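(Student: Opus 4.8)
The plan is to exhibit the displayed joint program as a faithful implementation of the two-stage regularization we actually intend: among all \DMD s that reproduce the prescribed moments, select the one whose associated maximum-entropy \PWCD{} carries the largest entropy. I would establish this by peeling the optimization into an inner problem over the diameters and an outer problem over the locations, and then recombining them.

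First I would fix a location configuration $\etavec$ --- equivalently, a \DMD{} with equal weights whose locations are the $\hat{\vec{x}}_i$ --- and invoke \Lemma{Lemma_MaxEntPWCD}: the corresponding maximum-entropy \PWCD{} is obtained by maximizing $h(p(\etavec,\vec{d}))$ over $\vec{d}$ alone, subject to $d_i>0$ and the disjointness constraints $d_i+d_j<\|\hat{\vec{x}}_i-\hat{\vec{x}}_j\|$. Writing $H(\etavec)$ for the resulting optimal value, this is exactly the inner maximization over $\vec{d}$ that appears in the statement. Because the \DMD{} definition in \Eq{Eq_DiracMixture} forces distinct locations, every $\|\hat{\vec{x}}_i-\hat{\vec{x}}_j\|$ is strictly positive, so sufficiently small positive diameters satisfy all constraints; the inner feasible set is therefore nonempty and $H(\etavec)$ is well defined for every admissible $\etavec$. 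Note that, with equal weights, the entropy expression depends on $\etavec$ only through the admissible range of $\vec{d}$: wider inter-point spacing permits larger diameters and hence larger entropy, which is precisely the mechanism that spreads the points out homogeneously.

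Next I would encode the regularization principle itself. The moment constraint $\mat{E}_M=\tilde{\mat{E}}_M$ depends only on $\etavec$ and carves out the set of moment-matching configurations, which is nonempty by the standing assumption that $L$ is chosen large enough. By definition of the regularizer, the most homogeneous moment-matching \DMD{} is the $\etavec$ maximizing $H(\etavec)$ over this set. Since the moment constraint involves only $\etavec$, whereas the positivity and disjointness constraints couple $\etavec$ and $\vec{d}$ without further restricting $\etavec$, the standard decomposition of a joint maximum into nested maxima gives
\[
\max_{\etavec:\,\mat{E}_M=\tilde{\mat{E}}_M} H(\etavec)
= \max_{\etavec,\,\vec{d}} h\!\left(p(\etavec,\vec{d})\right),
\]
with all three constraints imposed simultaneously on the right. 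A maximizing pair $(\etavec,\vec{d})$ then supplies a \DMD{} (from $\etavec$) with the prescribed moments together with its maximum-entropy \PWCD{} (from $\vec{d}$), which is the assertion.

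The step I expect to require the most care is the attainment of the inner maximum and the legitimacy of this nested decomposition. The disjointness constraints are \emph{strict} inequalities, so the feasible region for $\vec{d}$ is open, while the entropy increases monotonically with each diameter (it is concave in $\vec{d}$, as recorded in the remark following \Lemma{Lemma_MaxEntPWCD}); the supremum is therefore approached as the diameters grow until certain disjointness constraints become active --- on the boundary that the strict formulation excludes. I would resolve this by passing to the closure of the constraint set, where the maximum is genuinely attained by continuity of the entropy together with boundedness of the feasible diameters (each $d_i$ is capped by its nearest-neighbour distance), and by observing that the strict-inequality program in the statement is the intended open relaxation whose supremum equals this boundary value. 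Granting this, everything else follows routinely from \Lemma{Lemma_hi}, \Lemma{Lemma_MaxEntPWCD}, and the entropy expression.
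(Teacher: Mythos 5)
The paper offers no proof of this theorem at all: it is stated as a direct consequence of the construction in \Sec{Sec_PiecewiseConstant} and of the prose immediately preceding it, which already describes the coupling of the \DMD{} with the \PWCD{} and the simultaneous optimization of locations and diameters. Your nested-maximization argument is therefore not a different route so much as a formalization of what the paper leaves implicit, and it is correct: the moment constraint involves only $\etavec$, the positivity and disjointness constraints involve $\vec{d}$ for a given $\etavec$, so the joint maximum decomposes into an outer maximization of $H(\etavec)$ over moment-matching configurations and an inner maximization that, by \Lemma{Lemma_MaxEntPWCD}, yields exactly the corresponding maximum entropy \PWCD{}. Your observation about attainment is the one genuinely substantive point: since the entropy is strictly increasing in each $d_i$ and the disjointness constraints are strict inequalities, the inner feasible set is open and the supremum is only achieved on its closure, where some tangency constraints become active. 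The paper writes $\operatorname{arg\,max}$ over the open set without comment, both here and in \Lemma{Lemma_MaxEntPWCD}, so your proposal to pass to the closure (where the density and its entropy remain well defined, the diameters are bounded by nearest-neighbour distances, and the degenerate boundary $d_i=0$ is excluded because the entropy diverges to $-\infty$ there) repairs an imprecision the paper silently carries. In short, your plan proves more than the paper does; the only caution is that you should state explicitly that you are replacing the strict-inequality program by its closed relaxation, since that is a (mild) modification of the theorem as printed.
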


Of course, compared to the sole optimization of the diameters of the \PWCD{} in \Sec{Sec_PiecewiseConstant}, this optimization problem now has nonlinear constraints: The equality constraints for maintaining the desired moments are typically nonlinear. Also the inequality constraints for avoiding collisions of the spherical supports of the \PWCDs{} are now nonlinear as the locations $\hat{\vec{x}}_i$ now are variables.

\begin{remark}
It is important to note that all the required statistics such as the moments are directly calculated from the \DM{}. They are \emph{not} calculated from the corresponding \PWCD{} as this one solely serves regularization purposes.
\end{remark}

\section{Implementation and Complexity} \label{Sec_Implement}

%
%

Given the algorithm derived in the previous section, our goal is the efficient calculation of a \DMD{} with given moments and a homogeneous coverage of the state space. Homogeneity is achieved by a regularizer that picks out the most homogeneous solution from the infinite solution set that we have in case of more parameters than moment constraint.


\subsection{Symmetric Densities} \label{Subsection_Symmetry}

%
%

Often, more information besides the moments is available about the true density $\tilde{f}(\vec{x})$, such as information on its support or on given symmetries. Even when information of this type is unavailable, analogous assumptions could be made about the approximating density $f(\vec{x})$.

%
%

Here, we consider a special case of symmetric densities, \DMDs{} $f(\vec{x})$ that are symmetric with respect to their expected value $\EV{\vec{x}}{f}$. We only have to specify $L$ \emph{master components} $f_1, \ldots, f_L$ and implicitly end up with $2 L$ components, i.e., $f_1, \ldots, f_{2 L}$, where the \emph{master components} $f_1, \ldots, f_L$ control \emph{slave components} $f_{L+1}, \ldots, f_{2 L}$. The slave components are symmetric copies of the master components in the sense, that
\begin{equation*}
\hat{\vec{x}}_{i+L}-\EV{\vec{x}}{f}=-(\hat{\vec{x}}_i-\EV{\vec{x}}{f})
\end{equation*}
holds for $i=1, \ldots, L$. The weights $w_i$ and the diameters $d_i$ are simply copied according to $w_{i+L}=w_i$ and $d_{i+L}=d_i$ for $i=1, \ldots, L$.

%
%

Exploiting symmetries in this form has two major advantages. First, complexity is reduced as only half as many variables have to be optimized (this does not depend upon the number of dimensions). Second, the prescribed expected value is automatically maintained without an explicit constraint.

%
%

\begin{remark}
The master components are not confined to specific parts of the state space. They can be located anywhere, which simplifies the implementation.
\end{remark}


\subsection{Complexity}

%
%

Regularization is performed by maximizing the Shannon entropy of the corresponding \PWCD{}, which requires complying with a number of constraints quadratic in the number of \DC s. We will now take a look at the complexity, especially the number of constraints to maintain. The number of equality constraints (the moment constraints) is prespecified and for a given order $M$ bounded by \Eq{Eq_NumberOfMoments}. For the inequality constraints, we have $L$ linear positivity constraints for the radii $d_i$, $i=1,\ldots, L$ of the \PWCD{} and $(L-1)L/2$ nonlinear collision constraints. The latter ones are the most critical and will be investigated further, where we distinguish the scalar or $1$-dimensional case and the $N$-dimensional case with $N>1$.

%
%

In the scalar case, the number of collision constraints is reduced by maintaining an ordered list of \DC s. Then, only the distances between neighbors have to be considered, which reduces the number of collision constraints from $(L-1)L/2$ to $L-1$.

%
%

For the general multi-dimensional case with $N>1$, the number of collision constraints does not depend upon the dimension. However, it depends quadratically upon the number of components. For a small number of components, say up to $20$, this poses no problem. For more components, however, the number of constraints needs to be reduced.

%
%

Reducing the number of collision constraints while still guaranteeing a correct solution will be pursued in a follow-up paper. The key is to exploit the fact that simpler constraints can be devised to check for collisions of the support spheres onto the coordinate axes. This is much simpler and leads to fewer constraints. Non-colliding projections are a sufficient, but not a necessary, condition for non-colliding supports. Based on this insight, expensive collision constraints have to be considered for far fewer components. Details are given in the conclusions.


\section{Evaluation} \label{Sec_Evaluation}

%
%

We will now demonstrate the performance of the proposed \DMA{} method by some examples. One-dimensional and two-dimensional densities will be considered.

%
%

\begin{remark}
It is important to note, that in all cases where we consider underlying continuous densities, these are only used for generating the moments. They are \emph{not known} to the approximation methods!
\end{remark}

%
%

For comparison, we employ a solver that directly finds a root of the underdetermined system of equations given by the moment constraints. We use a 
Levenberg-Marquardt method for that purpose, that we from now on call \LMDMA{} method. Matlab provides an implementation by calling {\tt fsolve} with the appropriate options. This solver does neither provide unique nor reproducible results.

%
%

In the simulations, both optimization methods, the \LMDMA{} method and the proposed \MEDMA{} method, are initialized with a random parameter vector $\vec{\eta}$ drawn from a standard normal distribution.


\subsection{Examples for the One-dimensional Case}

%
%

We begin with the simplest case of generating moments from a \GD{} that are then used for characterizing a \DMD{}. For a standard normal distribution, we calculate the first two moments $e_1$, $e_2$. Then, we employ the \LMDMA{} method and the \MEDMA{} method to find \DMDs{} with $L=6$ components having exactly these moments. A comparison of densities and distributions is shown in \Fig{Fig_Gaussian_Compare_LM_ME}, where the top two figures show the result of the \LMDMA{}. This is just one representative result, as the solution is not unique and changes for every optimization performed. The bottom row in \Fig{Fig_Gaussian_Compare_LM_ME} shows the result of the \MEDMA{}. Here, the coverage is much more homogeneous. In addition, it becomes clear that it comes close to the underlying Gaussian as a maximum entropy solution is considered and the Gaussian is the continuous density with the highest entropy given a certain variance.
%
%
This convergence becomes even clearer when taking a look at \Fig{Fig_Gaussian_Convergence}. When the number of \DC s increases, in that case to $L=10$ and $L=15$, the generated \DM{} converges to the underlying Gaussian \emph{although this density is not known to the algorithm}.

\begin{figure*}[t]
\includegraphics{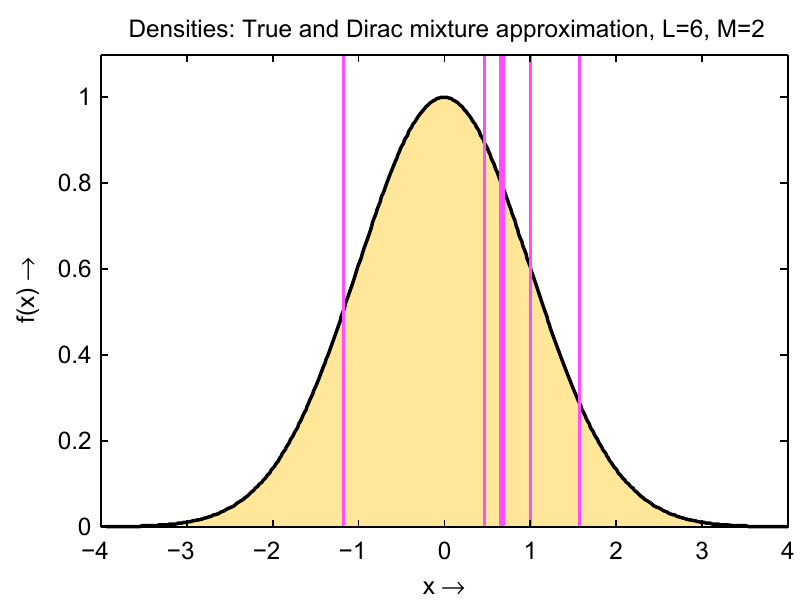}
\hspace*{\fill}
\includegraphics{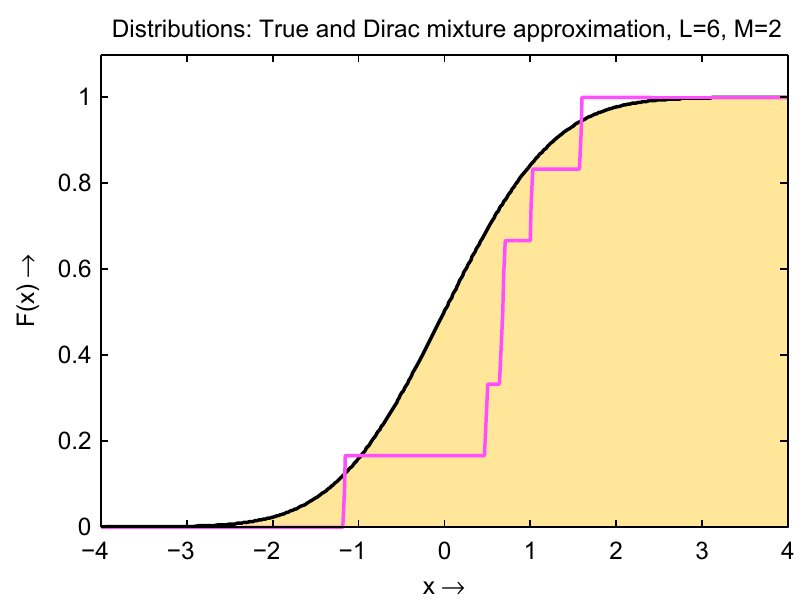}\\
\includegraphics{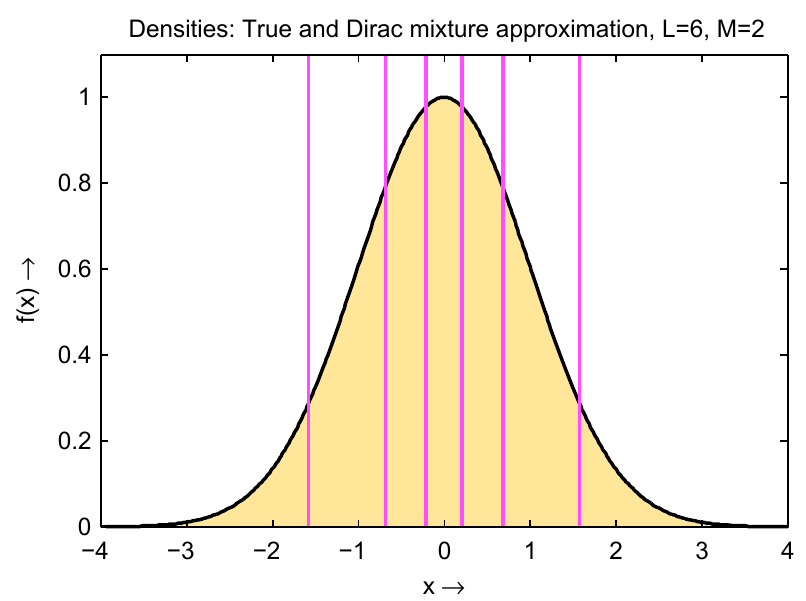}
\hspace*{\fill}
\includegraphics{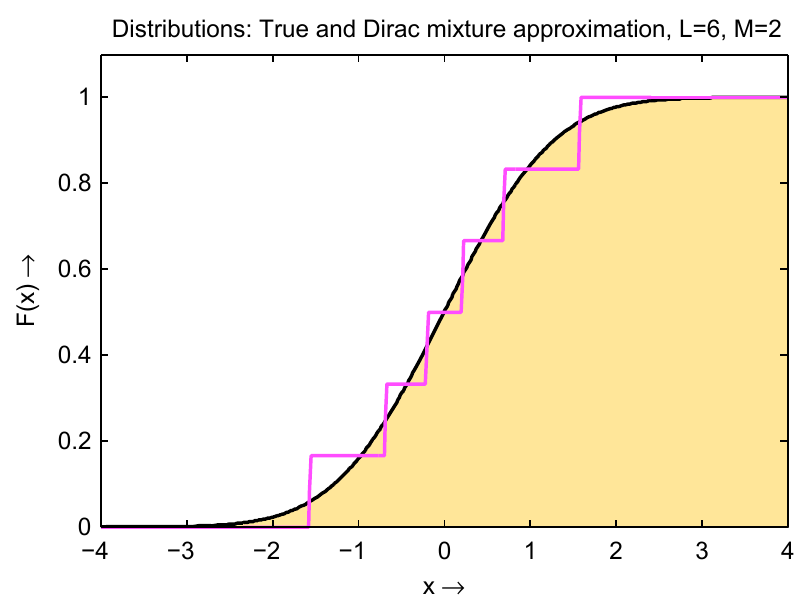}
\caption{Comparison of two \DMA{} methods for an underlying \GD{}. (Top row) Result of solving the underdetermined moment constraints with the \LMDMA{} method. (Bottom row) Result of the proposed \MEDMA{} method. (Left column) Comparison of the densities, where the (unknown) underlying Gaussian densities are shown in yellow and the \DMDs{} in purple. (Right column) Comparison of the cumulative distributions, where the (unknown) underlying Gaussian distribution is shown in yellow and the distributions of the \DM s in purple.}
\label{Fig_Gaussian_Compare_LM_ME}
\end{figure*}

\begin{figure*}[t]
\includegraphics{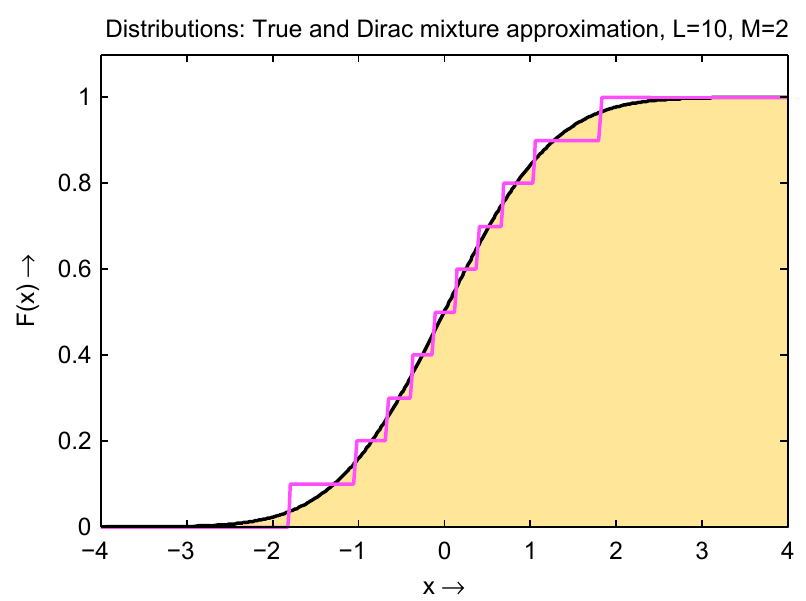}
\hspace*{\fill}
\includegraphics{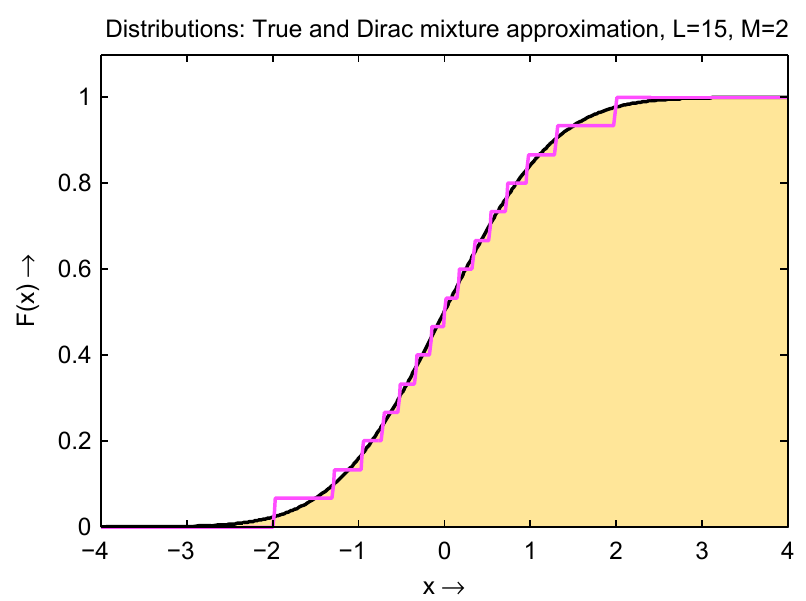}\\
\caption{\DMA{} for an underlying \GD{} with an increasing number of components $L$. Only moments up to order $M=2$ are maintained. (left) $L=10$. (Right) $L=15$. The \DMD{} quickly approaches the underlying Gaussian, although this density is not known to the approximation method.}
\label{Fig_Gaussian_Convergence}
\end{figure*}

%
%

A similar evaluation is now performed by using moments obtained from a \GMD{} with two components, weights $w_1=0.4$, $w_2=0.6$, means $m_1=-1.5$, $m_1=1.5$, and standard deviations $\sigma_1=\sigma_2=0.7$. Moments $e_0=1, e_1, \ldots, e_4$ up to fourth order are calculated according to the appendix and used for generating a \DM{} with these moments. \Fig{Fig_GM_Compare_LM_ME} shows the results for $L=10$ components, where the top row shows a comparison of densities and distributions for \DM s obtained with \LMDMA{}. Again, this is only one possible result as the solution is not unique. The bottom row shows the result obtained with \MEDMA{}, which is very homogeneous and close to the underlying \GMD{} in terms of its distribution.

\Fig{Fig_GM_Convergence} then shows that, for $M=6$ moments, the \MEDMA{} quickly converges to the underlying \GMD{} as demonstrated for $L=15$ and $L=25$.

\begin{figure*}[t]
\includegraphics{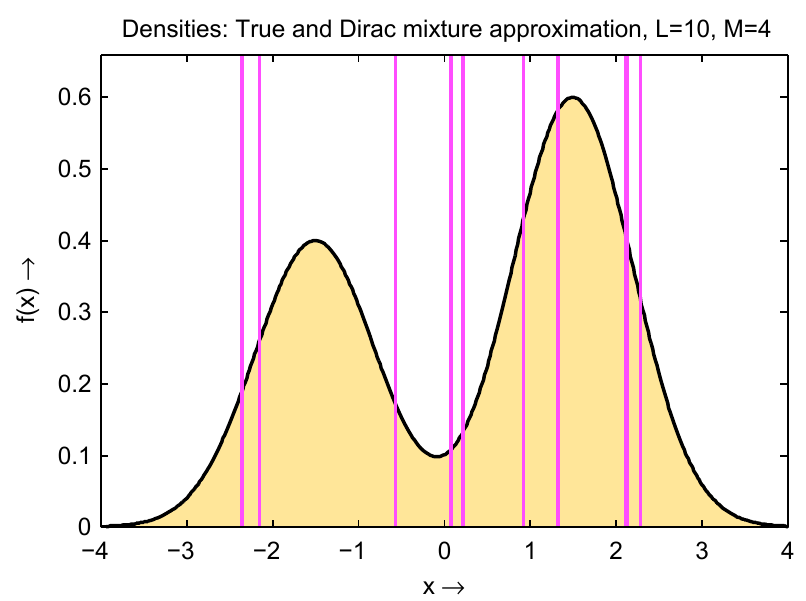}
\hspace*{\fill}
\includegraphics{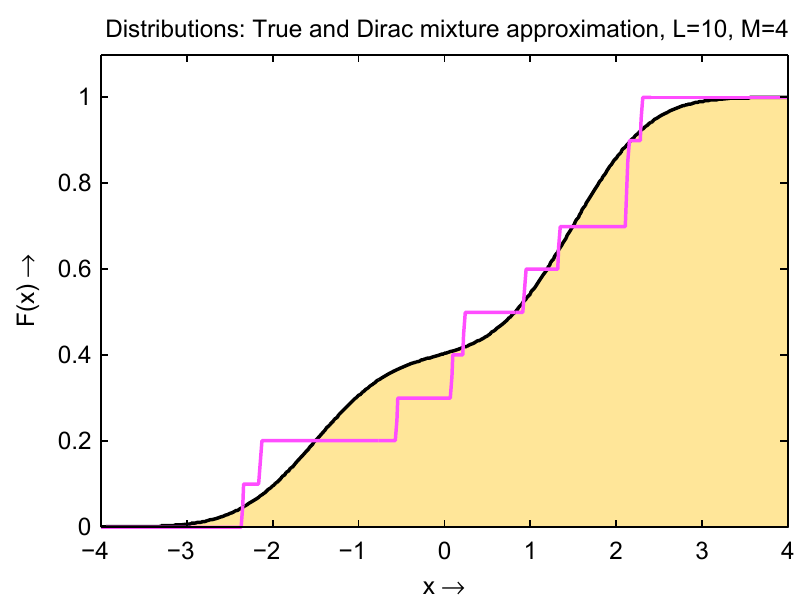}\\
\includegraphics{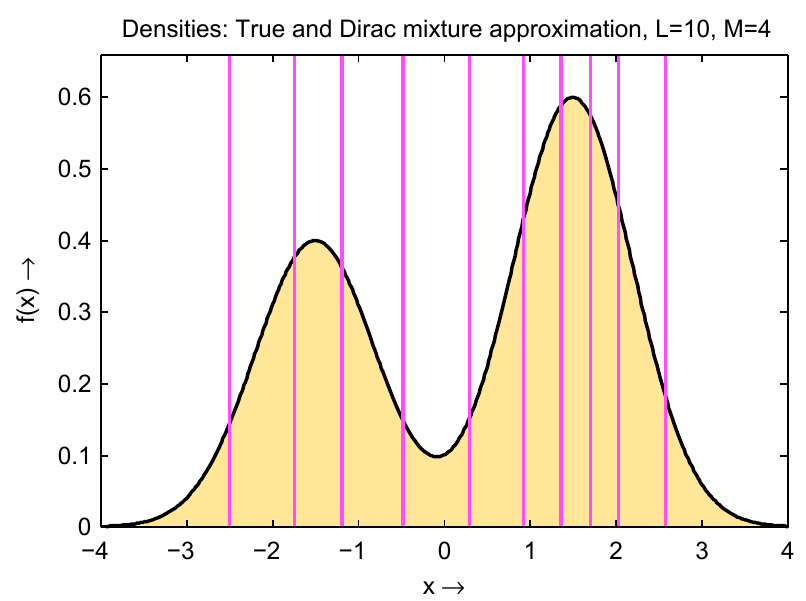}
\hspace*{\fill}
\includegraphics{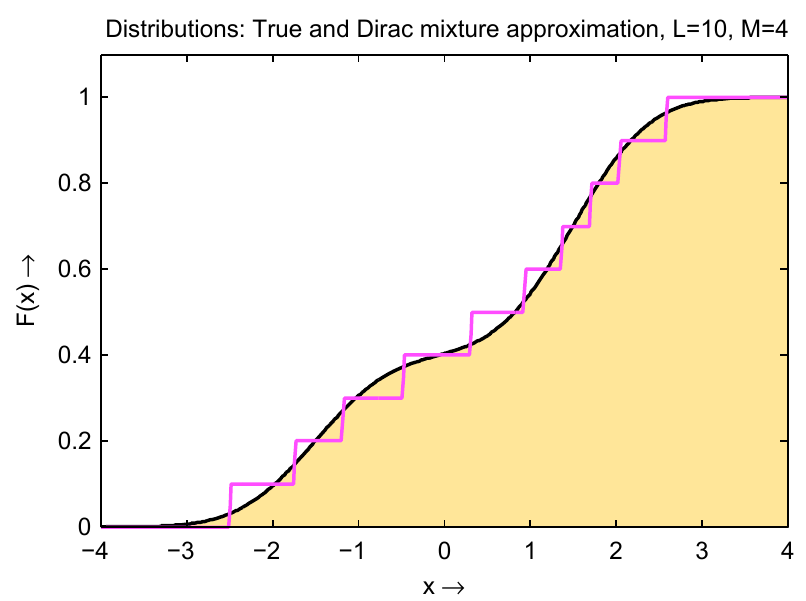}
\caption{Comparison of two \DMA{} methods for an underlying \GMD{}. (Top row) Result of solving the underdetermined moment constraints with the \LMDMA{} method. (Bottom row) Result of the proposed \MEDMA{} method. (Left column) Comparison of the densities, where the (unknown) underlying Gaussian mixture densities are shown in yellow and the \DMDs{} in purple. (Right column) Comparison of the cumulative distributions, where the (unknown) underlying Gaussian mixture distributions are shown in yellow and the distributions of the \DM s in purple. In both cases, moments up to order $M=4$ are maintained and the \DM{} comprises $L=10$ components.}
\label{Fig_GM_Compare_LM_ME}
\end{figure*}

\begin{figure*}[t]
\includegraphics{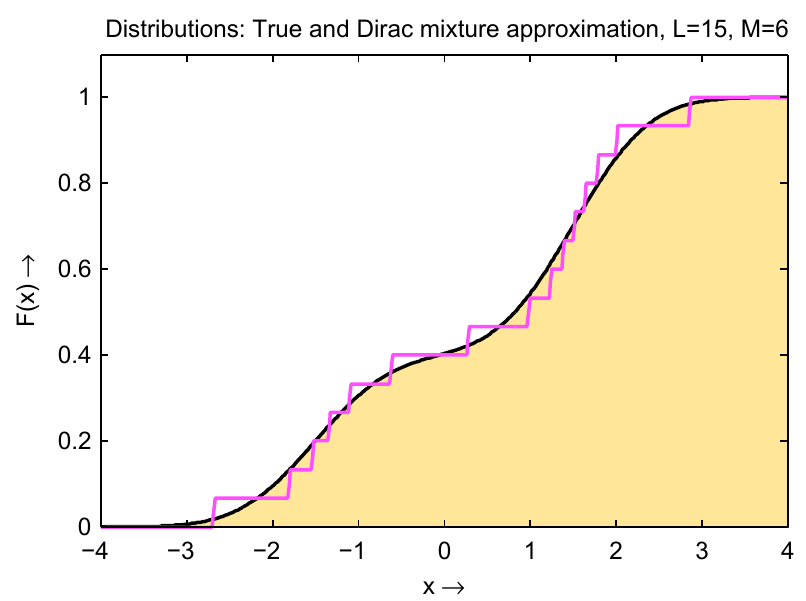}
\hspace*{\fill}
\includegraphics{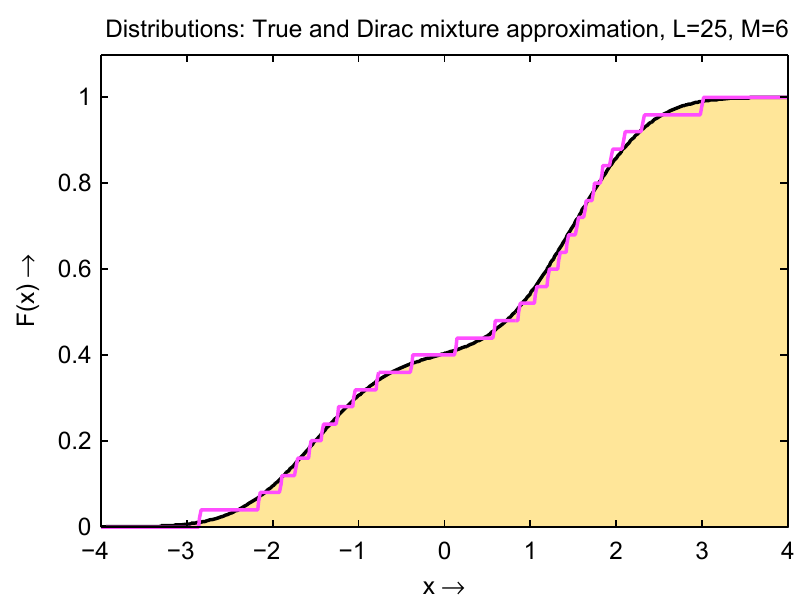}\\
\caption{\DMA{} for an underlying \GMD{} with an increasing number of components $L$. Moments up to order $M=6$ are maintained. (left) $L=15$. (Right) $L=25$. The \DMD{} quickly approaches the underlying Gaussian mixture, although this density is not know to the approximation method.}
\label{Fig_GM_Convergence}
\end{figure*}


\subsection{Examples for the Two-dimensional Case}

%
%

For the two-dimensional case $N=2$, we consider \DMDs{} maintaining moments up to second order, i.e., $e_{00}=1$ (the normalization constant), $e_{01}$, $e_{10}$, $e_{11}$, $e_{02}$, and $e_{20}$ are prespecified. For the specific choice of moments corresponding to the axis-aligned normal distribution $e_{00}=1$, $e_{01}=0$, $e_{10}=0$, $e_{11}=0$, $e_{02}=3$, and $e_{20}=1$, we obtain the results shown in \Fig{Fig_2D_Gaussian}. Here, symmetry is enforced as introduced in \SubSec{Subsection_Symmetry}. As a result, only $20$ master components are optimized and $20$ slave components follow accordingly.

\begin{figure*}[t]
\includegraphics{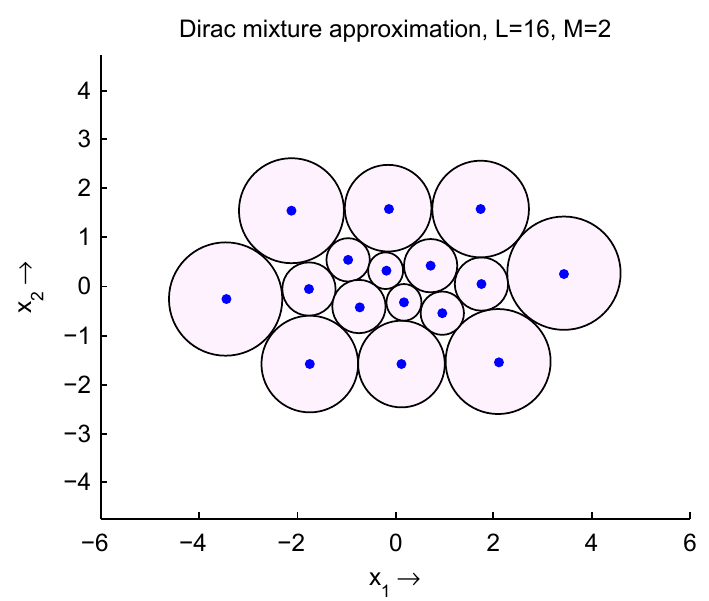}
\hspace*{\fill}
\includegraphics{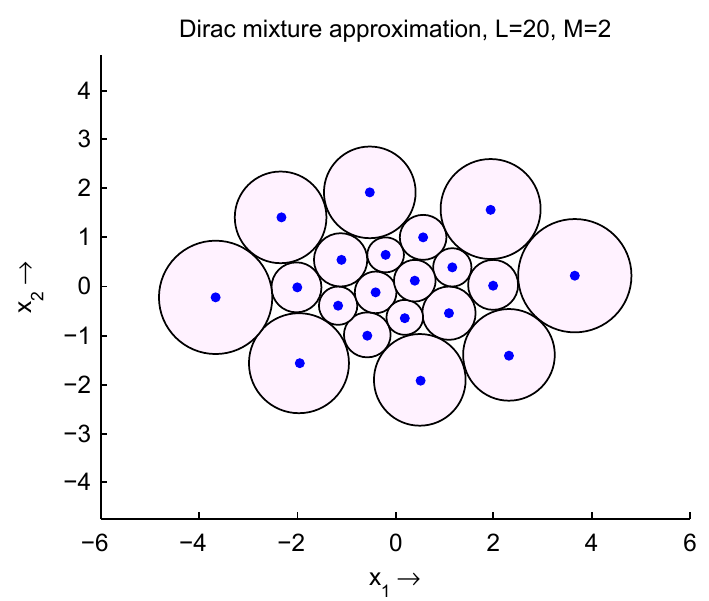}\\
\includegraphics{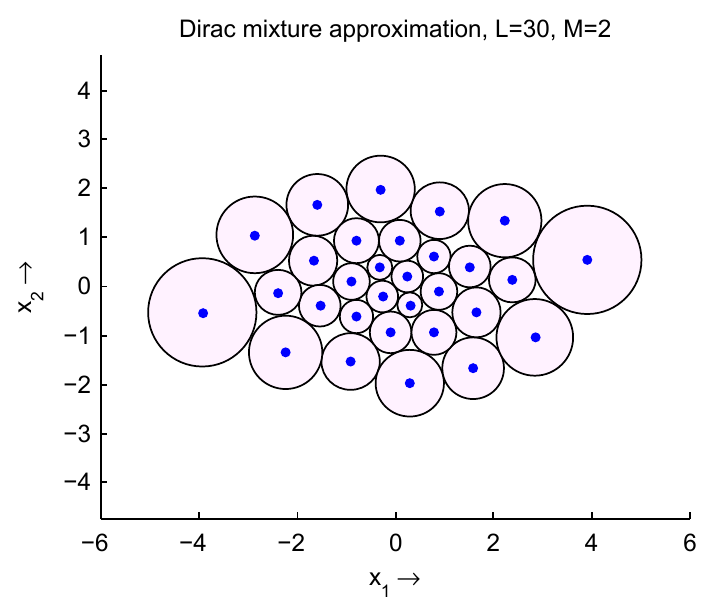}
\hspace*{\fill}
\includegraphics{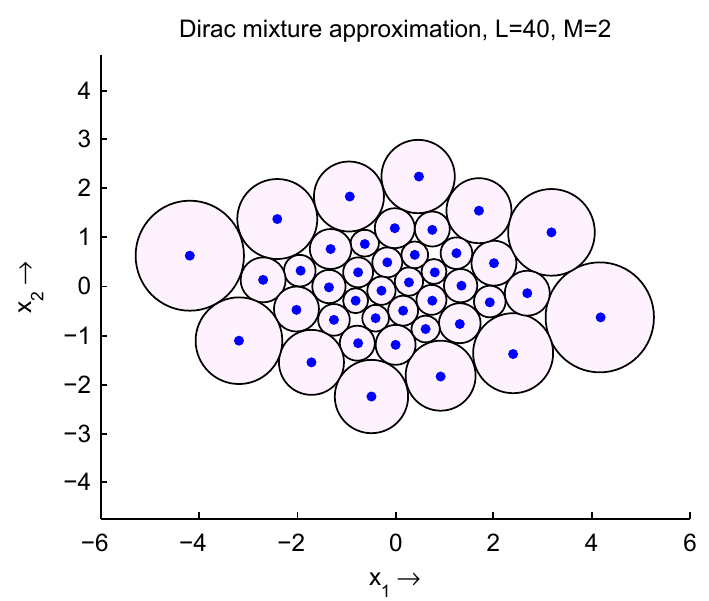}
\caption{Plots of \DMDs{} maintaining moments up to second order $M=2$ with different numbers of components. (top left) $L=16$. (top right) $L=20$. (bottom left) $L=30$. (bottom right) $L=40$. Symmetry is enforced, so that only $20$ master components are optimized.}
\label{Fig_2D_Gaussian}
\end{figure*}

%
%

It is obvious from \Fig{Fig_2D_Gaussian} that the resulting \DMDs{} converge to a Gaussian density with the given moments when the number of components increases. Again, the underlying density shape, in this case the Gaussian, is not known to the \DMA{} method.

\section{Conclusion} \label{Sec_Conclude}

%
%

This paper provides an efficient algorithm for calculating \DMDs{} with given moments and a homogeneous coverage of the state space. We focus on the case of fewer moment constraints than parameters. Ensuring a unique solution and exploiting the redundancy by optimizing the component arrangement is performed by regularization with respect to a corresponding \PWCD{}. The most homogeneous \DMD{} is obtained by maximizing the entropy of this \PWCD{}.

\subsection{Applications}

%
%

The proposed \MEDMA{} method will be used for generalizing the Progressive Gaussian Filter introduced for generative system models in \cite{Fusion13_Hanebeck} and for systems with given likelihoods in \cite{Fusion14_Steinbring}. So far, the Progressive Gaussian Filter maintains moments up to second order when progressively performing a measurement update. In addition, a Gaussian assumption is made. Using the proposed \MEDMA{}, higher-order moments will be propagated without any density assumption.

%
%

The \PWCD{} for a given \DMD{} as derived in \Sec{Sec_PiecewiseConstant} might also be useful by itself in other contexts than providing a convenient density for plug-in estimation of the entropy of a \DMD{}.

\subsection{Extensions}

%
%

Now, we will discuss several extensions to the basic algorithm described in this paper. First, we discuss optimizing the weights in addition to the locations of the components of the considered \DMD{}. Second, we focus on the complexity and how to decrease it. Third, we consider spaces different from the Euclidean space $\NewR^N$.


\paragraph{Weights}

%
%

In this paper, we focused on optimizing the locations of a \DMD{} only. Also optimizing the weights gives another $L-1$ degrees of freedom for $L$ components (as the weights have to sum up to one). This gives the advantage of potentially maintaining more moments for the same number of components.

%
%

The downside of optimizing the weights is obvious: Now there are components of different importance. Components with a small weight are almost negligible and do not carry much information although optimizing them is as costly as for large components. 
%
%
This is aggravated when components are fed through a nonlinear system in order to calculate the output density. In that case, the weights remain unchanged and the output weights are identical to the input weights. As the computational cost of propagating components does not depend on the weight, this implies that lots of computational power is spent for small output components with questionable usefulness.

%
%

The disadvantage of unequally weighted \DMDs{} becomes even more obvious when considering multiplication with a likelihood function in a Bayesian filter step. Already small components can potentially be weighted down even more making them useless, while for equally weighted components there is more leeway before components degenerate.


\paragraph{Complexity}

%
%

Let us briefly sum up the complexity of the algorithm described in this paper: For the scalar case, the complexity of the optimization problem is low. We have to maintain $M+1$ moment constraints (when considering moments up to order $M$) and $2 L - 1$ inequality constraints to optimize for $L$ location parameter of the desired \DMD{}. For the multi-dimensional case, we face two problems. First, the number of equality constraints corresponding to the number of moments up to order $M$ quickly increases with the number of dimensions $N$ so that calculating these moments for a \DMD{} soon becomes intractable. Second, the number of inequality constraints grows as $(L+1)L/2$ with the number of components.

%
%

For the multi-dimensional case, the first problem can be coped with by considering only the most relevant moments, which depends upon the application. The second problem, that is the quadratic growth of the inequality constraints, will be attacked by a two-step constraint hierarchy. The first step uses more conservative dimension-wise collision constraints, which results in $N (L-1)$ constraints. In the second step, multi-dimensional constraints are only assembled for the remaining \DC s that need further attention. This is expected to result in an additional ${\cal O}(L)$, say $k \, L$, constraints. Together with the additional positivity constraint for the radii $d_i$, $i=1,\ldots, L$, we obtain a number of about $k \, L + N(L-1) + L$ constraints. Hence, we trade an algorithm with a total number of $(L+1)L/2$ constraints for an algorithm with a number of $k \, L + N(L-1) + L$ constraints, which is more efficient when $N<L(L-1-2 k)/(2(L-1))$. This is already the case for $k=1$, $N=2$, and $L=7$.
%
%
Implementation of these hierarchical constraints, however, is a challenge as the number of constraints varies during runtime of the optimization procedure. Available optimization routines do not seem be able to cope with a varying number of constraints.


\paragraph{Alternative Spaces}

The techniques presented in this paper for the case that no underlying density is available, will be generalized to different Polish spaces ${\cal X}$, especially to periodic spaces such as the unit circle $S^1$ as it has been done in \cite{IFAC14_Hanebeck} for known \cpdf s.


%

\bibliographystyle{StyleFiles/IEEEtran_Capitalize}

\bibliography{%
Literature,%
ISAS-Bibtex/ISASPublikationen,%
ISAS-Bibtex/ISASPublikationen_laufend%
}

\begin{thebibliography}{10}
\providecommand{\url}[1]{#1}
\csname url@samestyle\endcsname
\providecommand{\newblock}{\relax}
\providecommand{\bibinfo}[2]{#2}
\providecommand{\BIBentrySTDinterwordspacing}{\spaceskip=0pt\relax}
\providecommand{\BIBentryALTinterwordstretchfactor}{4}
\providecommand{\BIBentryALTinterwordspacing}{\spaceskip=\fontdimen2\font plus
\BIBentryALTinterwordstretchfactor\fontdimen3\font minus
  \fontdimen4\font\relax}
\providecommand{\BIBforeignlanguage}[2]{{%
\expandafter\ifx\csname l@#1\endcsname\relax
\typeout{** WARNING: IEEEtran.bst: No hyphenation pattern has been}%
\typeout{** loaded for the language `#1'. Using the pattern for}%
\typeout{** the default language instead.}%
\else
\language=\csname l@#1\endcsname
\fi
#2}}
\providecommand{\BIBdecl}{\relax}
\BIBdecl

\bibitem{lefebvre_linear_2005}
T.~Lefebvre, H.~Bruyninckx, and J.~De~Schutter, ``The Linear Regression Kalman
  Filter,'' in \emph{Nonlinear Kalman Filtering for Force-Controlled Robot
  Tasks}, ser. Springer Tracts in Advanced Robotics, 2005, vol.~19.

\bibitem{julier_new_2000}
S.~Julier, J.~Uhlmann, and H.~F. Durrant-Whyte,
  ``\BIBforeignlanguage{English}{A New Method for the Nonlinear Transformation
  of Means and Covariances in Filters and Estimators},''
  \emph{\BIBforeignlanguage{English}{{IEEE} Transactions on Automatic
  Control}}, vol.~45, no.~3, pp. 477--482, Mar. 2000.

\bibitem{julier_scaled_2002}
S.~J. Julier, ``\BIBforeignlanguage{English}{The Scaled Unscented
  Transformation},'' in \emph{\BIBforeignlanguage{English}{Proceedings of the
  2002 {IEEE} American Control Conference ({ACC} 2002)}}, vol.~6, Anchorage,
  Alaska, {USA}, May 2002, pp. 4555-- 4559.

\bibitem{julier_reduced_2002}
S.~Julier and J.~Uhlmann, ``Reduced Sigma Point Filters for the Propagation of
  Means and Covariances Through Nonlinear Transformations,'' in
  \emph{Proceedings of the 2002 American Control Conference.}, vol.~2, 2002,
  pp. 887--892 vol.2.

\bibitem{tenne_higher_2003}
D.~Tenne and T.~Singh, ``\BIBforeignlanguage{English}{The Higher Order
  Unscented Filter},'' in \emph{\BIBforeignlanguage{English}{Proceedings of the
  2003 {IEEE} American Control Conference ({ACC} 2003)}}, vol.~3, Denver,
  Colorado, {USA}, Jun. 2003, pp. 2441--2446.

\bibitem{straka_measures_2014}
O.~Straka, J.~Dunik, and M.~Simandl, ``Measures of Non-Gaussianity in Unscented
  Kalman Filter Framework,'' in \emph{Proceedings of the 17th International
  Conference on Information Fusion (Fusion 2014)}, Salamanca, Spain, Jul. 2014.

\bibitem{ACC13_Kurz}
G.~Kurz, I.~Gilitschenski, and U.~D. Hanebeck, ``{Recursive Nonlinear Filtering
  for Angular Data Based on Circular Distributions},'' in \emph{Proceedings of
  the 2013 American Control Conference (ACC 2013)}, Washington D. C., USA, Jun.
  2013.

\bibitem{Fusion14_KurzGilitschenski}
------, ``{Deterministic Approximation of Circular Densities with Symmetric
  Dirac Mixtures Based on Two Circular Moments},'' in \emph{Proceedings of the
  17th International Conference on Information Fusion (Fusion 2014)},
  Salamanca, Spain, Jul. 2014.

\bibitem{Fusion13_Gilitschenski}
I.~Gilitschenski, G.~Kurz, and U.~D. Hanebeck, ``{Circular Statistics in
  Bearings-only Sensor Scheduling},'' in \emph{Proceedings of the 16th
  International Conference on Information Fusion (Fusion 2013)}, Istanbul,
  Turkey, Jul. 2013.

\bibitem{IPIN13_Kurz}
G.~Kurz, F.~Faion, and U.~D. Hanebeck, ``{Constrained Object Tracking on
  Compact One-dimensional Manifolds Based on Directional Statistics},'' in
  \emph{Proceedings of the Fourth IEEE GRSS International Conference on Indoor
  Positioning and Indoor Navigation (IPIN 2013)}, Montbeliard, France, Oct.
  2013.

\bibitem{CDC06_Schrempf-DiracMixt}
O.~C. Schrempf, D.~Brunn, and U.~D. Hanebeck, ``{D}ensity {A}pproximation
  {B}ased on {D}irac {M}ixtures with {R}egard to {N}onlinear {E}stimation and
  {F}iltering,'' in \emph{Proceedings of the 2006 IEEE Conference on Decision
  and Control (CDC 2006)}, San Diego, California, USA, Dec. 2006.

\bibitem{MFI06_Schrempf-CramerMises}
------, ``{Dirac Mixture Density Approximation Based on Minimization of the
  Weighted Cram\'{e}r-von Mises Distance},'' in \emph{Proceedings of the 2006
  IEEE International Conference on Multisensor Fusion and Integration for
  Intelligent Systems (MFI 2006)}, Heidelberg, Germany, Sep. 2006, pp.
  512--517.

\bibitem{CDC07_HanebeckSchrempf}
U.~D. Hanebeck and O.~C. Schrempf, ``{G}reedy {A}lgorithms for {D}irac
  {M}ixture {A}pproximation of {A}rbitrary {P}robability {D}ensity
  {F}unctions,'' in \emph{Proceedings of the 2007 IEEE Conference on Decision
  and Control (CDC 2007)}, New Orleans, Louisiana, USA, Dec. 2007, pp.
  3065--3071.

\bibitem{ACC07_Schrempf-DiracMixt}
O.~C. Schrempf and U.~D. Hanebeck, ``{R}ecursive {P}rediction of {S}tochastic
  {N}onlinear {S}ystems {B}ased on {O}ptimal {D}irac {M}ixture
  {A}pproximations,'' in \emph{Proceedings of the 2007 American Control
  Conference (ACC 2007)}, New York, New York, USA, Jul. 2007, pp. 1768--1774.

\bibitem{CDC09_HanebeckHuber}
U.~D. Hanebeck, M.~F. Huber, and V.~Klumpp, ``{Dirac Mixture Approximation of
  Multivariate Gaussian Densities},'' in \emph{Proceedings of the 2009 IEEE
  Conference on Decision and Control (CDC 2009)}, Shanghai, China, Dec. 2009.

\bibitem{MFI08_Hanebeck-LCD}
U.~D. Hanebeck and V.~Klumpp, ``{Localized Cumulative Distributions and a
  Multivariate Generalization of the Cram\'{e}r-von Mises Distance},'' in
  \emph{Proceedings of the 2008 IEEE International Conference on Multisensor
  Fusion and Integration for Intelligent Systems (MFI 2008)}, Seoul, Republic
  of Korea, Aug. 2008, pp. 33--39.

\bibitem{ACC13_Gilitschenski}
I.~Gilitschenski and U.~D. Hanebeck, ``{Efficient Deterministic Dirac Mixture
  Approximation},'' in \emph{Proceedings of the 2013 American Control
  Conference (ACC 2013)}, Washington D. C., USA, Jun. 2013.

\bibitem{IFAC08_Huber}
M.~F. Huber and U.~D. Hanebeck, ``{Gaussian Filter based on Deterministic
  Sampling for High Quality Nonlinear Estimation},'' in \emph{Proceedings of
  the 17th IFAC World Congress (IFAC 2008)}, vol.~17, no.~2, Seoul, Republic of
  Korea, Jul. 2008.

\bibitem{CISS14_Hanebeck}
U.~D. Hanebeck, ``{Kernel-based Deterministic Blue-noise Sampling of Arbitrary
  Probability Density Functions},'' in \emph{Proceedings of the 48th Annual
  Conference on Information Sciences and Systems (CISS 2014)}, Princeton, New
  Jersey, USA, Mar. 2014.

\bibitem{Fusion14_Hanebeck}
------, ``{Sample Set Design for Nonlinear Kalman Filters viewed as a Moment
  Problem},'' in \emph{Proceedings of the 17th International Conference on
  Information Fusion (Fusion 2014)}, Salamanca, Spain, Jul. 2014.

\bibitem{IFAC14_Hanebeck}
U.~D. Hanebeck and A.~Lindquist, ``{Moment-based Dirac Mixture Approximation of
  Circular Densities (to appear)},'' in \emph{Proceedings of the 19th IFAC
  World Congress (IFAC 2014)}, Cape Town, South Africa, Aug. 2014.

\bibitem{Fusion13_Hanebeck}
U.~D. Hanebeck, ``{PGF 42: Progressive Gaussian Filtering with a Twist},'' in
  \emph{Proceedings of the 16th International Conference on Information Fusion
  (Fusion 2013)}, Istanbul, Turkey, Jul. 2013.

\bibitem{Fusion14_Steinbring}
J.~Steinbring and U.~D. Hanebeck, ``{Progressive Gaussian Filtering Using
  Explicit Likelihoods},'' in \emph{Proceedings of the 17th International
  Conference on Information Fusion (Fusion 2014)}, Salamanca, Spain, Jul. 2014.

\end{thebibliography}

\begin{appendix}

\section{Moments of Scalar Gaussian Density}

When mean and standard deviation of a Gaussian random variable are
given, \emph{all} higher--order moments and central moments can be
deduced analytically. The central moments are given by
\begin{equation*}
C_i = E_{f} \left\{ (x-m)^i \right\} =
  \begin{cases}
    \displaystyle\prod_{\stackrel{j=1}{j \text{ odd}}}^{i-1} j \, \sigma^i & i \text{ even} \enspace , \\[2mm]
    0 & i \text{ odd} \enspace ,
  \end{cases}
\end{equation*}
the moments by
\begin{equation*}
E_i = E_{f} \left\{ x^i \right\} = \sum_{k=0}^i {i \choose k} C_{i-k} \, m^k
\end{equation*}
with $C_0$ (the zeroth central moment, the area under the density) is defined as $C_0=1$.

\begin{example}[First Moments and Central Moments of Gaussian Density]
The first eight moments $E_i$, $i=1,\ldots,8$, and central moments
$C_i$, $i=1,\ldots,8$, of a Gaussian density with mean $m$ and
standard deviation $\sigma$ are given by
\begin{equation*}
\begin{split}
C_1 & = 0 \enspace , \\
C_2 & = \sigma^2 \enspace , \\
C_3 & = 0 \enspace , \\
C_4 & = 3 \, \sigma^4 \enspace , \\
C_5 & = 0 \enspace , \\
C_6 & = 15 \, \sigma^6 \enspace , \\
C_7 & = 0 \enspace , \\
C_8 & = 105 \, \sigma^8 \enspace ,
\end{split}
\,
\begin{split}
E_1 & = m \enspace , \\
E_2 & = m^2 + \sigma^2 \enspace , \\
E_3 & = m^3 + 3 \, m \, \sigma^2 \enspace , \\
E_4 & = m^4 + 6 \, m^2 \, \sigma^2 + 3 \, \sigma^4 \enspace , \\
E_5 & = m^5 + 10 \, m^3 \, \sigma^2 + 15 \, m \, \sigma^4 \enspace , \\
E_6 & = m^6 + 15 \, m^4 \, \sigma^2 + 45 \, m^2 \, \sigma^4 + 15 \, \sigma^6 \enspace , \\
E_7 & = m^7 + 21 \, m^5 \, \sigma^2 + 105 \, m^3 \, \sigma^4 + 105 \, m \, \sigma^6 \enspace , \\
E_8 & = m^8 + 28 \, m^6 \, \sigma^2 + 210 \, m^4 \, \sigma^4  + 420\, m^2 \, \sigma^6 + 105 \, \sigma^8.
\end{split}
\end{equation*}
\end{example}


\section{Moments of Mixture}

We consider scalar mixtures of the form
\begin{equation*}
f(x) = \sum_{k=1}^P w_k \, f_k(x)
\end{equation*}
with
\begin{equation*}
w_k \ge 0
\end{equation*}
for $k=1, \ldots, P$ and 
\begin{equation*}
\sum_{k=1}^P w_k = 1 \enspace .
\end{equation*}
When the moments of the \emph{individual} densities $f_k(.)$ in the mixture are given by $E_i^{(k)}$ for $k=1,\ldots,P$, the individual moments can now be added up to the moments of the mixture denoted by $E_i^{M}$
\begin{equation*}
E_i^{M} = \sum_{k=1}^P w_k \, E_i^{(k)}
\enspace .
\end{equation*}
Finally, central moments of the mixture are obtained as
\begin{equation*}
C_i^{M} = \sum_{j=0}^i {i \choose j} E_{i-j}^{M} \, \left(
-E_1^{M} \right)^j
\enspace .
\end{equation*}

\end{appendix}

\end{document}